\newtheorem{theorem}{Theorem}%[section]
\newtheorem{lemma}[theorem]{Lemma}
\newtheorem{corollary}[theorem]{Corollary}
\theoremstyle{definition}
\theoremstyle{remark}
\newtheorem{remark}[theorem]{Remark}
\newcommand{\F}{\mathcal{F}}
\newcommand{\E}{\mathcal{E}}
\newcommand{\ii}{{\rm i}}
\newcommand{\N}{\mathcal{N}}
\newcommand{\R}{\mathbb{R}}
\newcommand{\Z}{\mathbb{Z}}
\newcommand\1{{\ensuremath {\mathds 1} }}
\newcommand{\dt}{{\rm d}t}
\newcommand{\ds}{{\rm d}s}
\newcommand{\B}{{\mathcal{B}}}
\newcommand{\vphi}{{\varphi}}
\newcommand{\ca}{{\check a}}
\begin{document}

\title[BEC grand canonically]{Another proof of BEC in the GP-limit}

%\author[R. L. Frank]{Rupert L. Frank}
%\address[R. L. Frank]{Mathematisches Institut der Universit\"at M\"unchen, Theresienstr. 39, 80333 M\"unchen, Germany, and Mathematics 253-37, Caltech, Pasadena, CA 91125, USA}
%\email{rlfrank@caltech.edu}

\author[C. Hainzl]{Christian Hainzl}
\address[C. Hainzl]{Mathematisches Institut der Universit\"at M\"unchen, Theresienstr. 39, 80333 M\"unchen, Germany}
\email{hainzl@math.lmu.de}

%\author[R. Seiringer]{Robert Seiringer}
%\address[R. Seiringer]{Institute of Science and Technology, Klosterneuburg, Austria}
%\email{langmann@kth.se}

\begin{abstract}
We present a fresh look at the methods introduced by Boccato, Brennecke, Cenatiempo, and Schlein concerning the trapped Bose gas
and give a conceptually very simple and concise proof of BEC in the Gross-Pitaevskii limit for small interaction potentials. 
\end{abstract}

\maketitle

\section{Introduction}

One of the major achievements in mathematical quantum mechanics within the last 25 years was the proof of Bose-Einstein condensation (BEC) for trapped 
Bose gases  by Lieb and Seiringer in 2002 \cite{LS1}, see also \cite{LSSY,LS2}. Their work was based on preceding works of Dyson \cite{Dy} and Lieb and Yngvason \cite{LY} on the ground state energy of dilute Bose gases. Since then several different proofs of BEC in the Gross-Pitaevskii regime, and beyond, have been carried out. Nam, Rougerie and Seiringer \cite{NRS} gave a proof using the quantum de Finetti theorem. Nam, Napiorkowski, Ricaud and Triaud \cite{NNRT} used ideas of  \cite{BSo,BFS}, while more recently Fournais \cite{F}, by means of techniques developed in his joint work with Solovej \cite{FS} on the Lee-Huang-Yang conjecture, gave a relatively short proof, valid well beyond the GP-regime. While the above approaches are based on localization techniques in configuration space, 
Boccato, Brennecke, Cenatiempo and Schlein (BBCS), prior to \cite{NNRT,F,BSo,FS}, developed a different approach, more in the spirit of Bogolubov's original work. They use  unitary rotations to encode the expected ground state. On the one hand they achieved optimal error bounds in their proof of BEC \cite{BBCS0,BBCS1}, on the other hand  this approach culminated in the rigorous establishment of Bogolubov theory on a periodic box \cite{BBCS}.  
Although their works are voluminous, the methods are conceptually quite accessible and rather straight forward.  

The aim of the present work is to take a fresh look at the approach of BBCS, with the obvious difference 
that we treat the system in a grand-canonical way, inspired by Brietzke and Solovej \cite{BSo}. 
This allows us to conceptually simplify the approach of BBCS and additionally streamline the error estimates. 
A further advantage of this approach is that the emergence of the scattering length in the final result comes out automatically and does 
not have to be put in from the beginning. Thanks to the smallness assumption on the interaction potential we achieve to present a concise proof of BEC in the GP-limit with optimal error bounds. 
It should also be remembered that another advantage of the use of unitary rotations is the fact that one simultaneously produces precise upper and lower bounds. 
On the downside it is fair to mention that one needs regularity assumptions on the interaction $V$, excluding the hard-core potential which is included in the results \cite{LS1,NRS,F,NNRT}. 
The smallness assumption of the interaction potential simplifies our approach significantly. In \cite{ABS} Adhikari, Brennecke and Schlein managed to overcome the smallness condition by an independent argument, without relying on previous results \cite{LS1,NRS}, using additional cubic and quartic transformations, which makes the proof technically even harder than \cite{BBCS0,BBCS1}. This suggests that getting rid of this smallness condition, however, seems to be a major task within this approach.

We consider a system of bosons in the Gross-Pitaevskii regime by means of the 
grand canonical Hamiltonian 
$$ H_\mu = H_N - \mu \N ,$$
with $H_N = \bigoplus_{n=0}^\infty H_N^n$, where the $n$-particle Hamiltonian is given by
$$
H_N^n = \sum_{i=1}^n -\Delta_i + \kappa \sum_{i<j}^n V_N(x_i-x_j).
$$
$H_\mu$ is 
acting on the bosonic Fock space 
$$\mathcal{F}(\mathcal{H}) = \bigoplus_{n=0}^\infty \mathcal{H}^{\otimes_s n}$$
with  $\mathcal{H} = L^2(\Lambda)$, with $\Lambda= [-1/2, 1/2]^3$. 
Further the GP-regime is reflected by the scaling of the potential 
%We assume the particles being confined in a box $\Lambda = [-1/2,1/2]^3$ interacting through a positive $V(x)$, which, for convenience, is assumed to 
%have compact support. 
%with scattering length of the order of $1/N$. 
%We further impose periodic boundary conditions. The 
 $$V_N(x) :=  N^2 V(Nx),$$where $V(x)$ is assumed to be positive and compactly supported and $\in L^1(\R^3) \cap L^3(\R^3)$. We impose periodic boundary conditions on the
box $\Lambda= [-1/2, 1/2]^3$. In that sense the Hamiltoinan $H_\mu$  should actually contain the periodized potential. However, we will work  mainly with the variant in momentum space, where the periodization is automatic.

%Let us observe that $N$ acts as a parameter, however, we choose the chemical potential, similar to \cite{BSo}, in such a way that the expected number of particles in the ground state, to leading order, is $N$. More precisely, we choose the chemical potential as $\mu = 8 \pi a$, where $a$ is the scattering length of $\kappa V$.

Notice that $N$ in the Hamiltonian $H_\mu$ acts as a parameter. However, we choose the chemical potential so that the expected number of particles in the ground state, to the leading order, is $N$. More precisely, we follow \cite{BSo} and choose the chemical potential as $\mu = 8 \pi a$, where $a$ is the scattering length of $\kappa V$. 

In contrast to previous works we define the scattering length via its Born series, in the form
\begin{equation}\label{sclBorn}
4 \pi a := \frac \kappa 2 \int_{\R^3} V(x) dx - \left \langle \frac \kappa 2 V, \frac{1} {-\Delta + \frac \kappa 2 V} \frac \kappa 2 V \right \rangle,
\end{equation}
for the simple reason that this is exactly the way how the scattering length appears in our approach. 
In a more concise way, see \cite{HS}, the scattering length \eqref{sclBorn} can also be expressed as
$$ 4 \pi a = \left \langle \sqrt{v} , \frac 1 {1 + \sqrt{v} \frac 1{-\Delta} \sqrt{v}} \sqrt{v} \right \rangle,
$$
with $v = \frac \kappa 2 V$.

\section{Main results} 

For convenience we rewrite the Hamiltonian $H_\mu$ in momentum space via 
$$
H_\mu = \sum_p (p^2 - \mu)  a_p^\dagger a_p + \frac{\kappa}{2}\sum_{r,p,q} \hat{V}_N (r) a_{p+r}^\dagger a_q^\dagger a_p a_{q+r},
$$
with $p \in \Lambda^\ast :=  2\pi \mathbb{Z}^3$, and $a^\dagger_p := a^\dagger(\phi_p)$, $a_p :=  a(\phi_p)$
the usual creation and annihilation operators on the Fock space over the  periodic box, i.e., $\phi_p(x) = e^{ix\cdot p}$. 
Notice, $$\hat{V}_N (r) = \frac 1N \hat V(r/N).$$
%
%Let $\phi_p(x) = e^{ixp}, p \in \Lambda^\ast :=  2\pi \mathbb{Z}^3$. This gives an ONB of $L^2(\Lambda)$. 
%The $\phi_p$ are eigenfunctions with periodic boundary conditions of $-\Delta$ with eigenvalues $p^2$. \\ 
%Let $a^\dagger_p := a^\dagger(\phi_p)$ and $a_p :=  a(\phi_p)$. From \eqref{CCR} we immediately see 
%\[ \label{CCRdiscrete}
%[a^\dagger_p, a^\dagger_q] = 0 = [a_p, a_q], \quad [a_p, a^\dagger_q] = \delta_{p,q}.
%\]
%

Let 
$$E_\mu(N) := \inf\{\langle \psi, H_\mu\psi \rangle | \psi \in \mathcal{F}, \|\psi\| = 1 \} $$
be the ground stat energy on the Fock space. The first theorem concerns the grand-canonical ground state energy. 
The statement resembles well known results in the literature, e.g. \cite{LSY,BBCS0,BBCS}. 
Its proof forms the basis for the subsequent establishment of BEC. 
\begin{theorem} \label{energyBound} 
Let $\mu = 8 \pi a$. Then for $\kappa$ small enough 
$$
E_\mu(N) = - 4 \pi a N + O(1) ,
$$
asymptotically as $N$ tends to infinity. 
\end{theorem}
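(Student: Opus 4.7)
The plan is to prove matching upper and lower bounds by conjugating $H_\mu$ with $U := WT$. Here $W := \exp\bigl(\sqrt{N}(a_0^\dagger - a_0)\bigr)$ is the Weyl operator that implements the grand-canonical condensate shift $W^\ast a_0 W = a_0 + \sqrt{N}$, while $T := e^K$ is the Bogolubov transformation generated by
$$
K = \tfrac12 \sum_{p \in \Lambda^\ast \setminus \{0\}} \eta_p\bigl(a_p^\dagger a_{-p}^\dagger - a_p a_{-p}\bigr),
$$
with coefficients $\eta_p$ chosen as the momentum-space solution (in the Born approximation, which is legitimate by the assumed smallness of $\kappa$) of the scattering equation
$$
p^2 \eta_p + \tfrac{\kappa N}{2}\hat V_N(p)(1 + \eta_p) \approx 0.
$$
Because $T^\ast a_p T = \cosh(\eta_p) a_p + \sinh(\eta_p) a_{-p}^\dagger$, the conjugated Hamiltonian assumes the schematic form
$$
U^\ast H_\mu U = C_N + \mathcal{Q} + \mathcal{R},
$$
where $C_N$ is a scalar, $\mathcal{Q} \geq 0$ is a purely diagonal quadratic form in $\{a_p, a_p^\dagger\}_{p \neq 0}$ (its pairing part having been absorbed by $T$), and $\mathcal{R}$ gathers the surviving cubic and quartic terms, each of which carries an explicit factor of $\kappa$.

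The scattering length emerges automatically inside $C_N$, in precisely the Born-series form \eqref{sclBorn} that the paper adopts as its definition of $a$: the Weyl step contributes the bare scalar $-\mu N + \tfrac{\kappa N}{2}\hat V(0)$ (from $-\mu\, a_0^\dagger a_0$ and from the quartic interaction with all four operators replaced by $\sqrt N$), while the Bogolubov step contributes, through resummation of the pairing operators that $T$ rotates away, the second-order Born term $-N\langle \tfrac\kappa 2 V, (-\Delta + \tfrac\kappa 2 V)^{-1}\tfrac\kappa 2 V\rangle + O(1)$. Invoking \eqref{sclBorn} together with $\mu = 8\pi a$ gives $C_N = (4\pi a - \mu)N + O(1) = -4\pi a N + O(1)$. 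The upper bound then follows by testing on the trial state $\Psi := U\Omega$: since $\mathcal{Q}$ annihilates the vacuum and $\langle \Omega, \mathcal{R}\Omega\rangle$ is an $O(1)$ normal-ordering constant, $\langle \Psi, H_\mu \Psi\rangle \leq -4\pi a N + O(1)$ is immediate.

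For the lower bound I would write an arbitrary normalized $\psi \in \F$ as $\psi = U\phi$ with $\|\phi\| = 1$ and reduce the claim to $\langle \phi, (\mathcal{Q} + \mathcal{R})\phi\rangle \geq -O(1)$. The positivity of $\mathcal{Q}$ is manifest; the essential mechanism is to dominate each contribution of $\mathcal{R}$ by $\mathcal{Q}$ plus the transported gauge buffer $-\mu\, U^\ast \N U$, using Cauchy--Schwarz estimates whose prefactors carry an explicit power of $\kappa$. The principal obstacle is the cubic term $\sqrt N \kappa \sum_{r,p} \hat V_N(r)\, a_{p+r}^\dagger a_{-r}^\dagger a_p + \mathrm{h.c.}$ produced by $W^\ast$, which in the original BBCS analysis forces an additional cubic Bogolubov conjugation. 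Here the smallness of $\kappa$ makes a direct estimate of the form $\pm \mathcal{R}_3 \leq \delta \sum_{p \neq 0} p^2\, a_p^\dagger a_p + C\delta^{-1}\kappa\, \N$ available for small $\delta > 0$, the first piece being absorbed into $\mathcal{Q}$ and the second into $-\mu\N$. The technical heart of the argument is the careful bookkeeping that keeps all such commutator remainders, together with the error from using the Born-level $\eta_p$ in place of the exact scattering solution, strictly at the $O(1)$ scale.
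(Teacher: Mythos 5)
Your proposal takes a genuinely different route from the paper. The paper never introduces a Weyl operator at all: it uses the \emph{number-conserving} generalized Bogolubov transformation $e^{\B}$ with
$\B = \tfrac{1}{2N}\sum_p \vphi_p(a_p^\dagger a_{-p}^\dagger a_0 a_0 - \mathrm{h.c.})$,
exploits $[H_\mu,\N]=[e^\B,\N]=0$ to work in fixed particle-number sectors, and first restricts to $\N \le 10N$ via the Brietzke--Solovej grouping argument. That design is not cosmetic; it sidesteps the two places where your outline as written runs into genuine trouble.

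First, your scattering equation is wrong. You write $p^2\eta_p + \tfrac{\kappa N}{2}\hat V_N(p)(1+\eta_p)\approx 0$ and declare the Born approximation ``legitimate.'' The actual discretized scattering equation used in the paper is \eqref{scattequ}, $p^2\vphi_p + \tfrac{\kappa}{2N}\sum_q\hat V((p-q)/N)\vphi_q = -\tfrac\kappa2\hat V(p/N)$, and the convolution is not optional: it is exactly what renormalizes $Q_4$ and reproduces the full resolvent $(-\Delta+\tfrac\kappa2 V)^{-1}$ in \eqref{sclBorn}. A local resummation $\eta_p = -\tfrac{\kappa\hat V(p/N)/2}{p^2+\kappa\hat V(p/N)/2}$, or a truncated Born series, matches the true Born series only at first order; at second order the cross-momentum terms $\hat V(p)\hat V(p-q)\hat V(q)/(p^2q^2)$ are missed. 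The resulting discrepancy in $a$ is $O(\kappa^2)$, which after multiplying by $N$ produces an $O(\kappa^2 N)$ error in $E_\mu(N)$ --- not $O(1)$ for fixed small $\kappa$. The paper's Lemma~\ref{scat} (proving $|a_N-a|\le C/N$) works precisely because $\vphi_p$ solves the exact convolution equation \eqref{scattequ}, inverted in \eqref{deffi}.

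Second, the Weyl shift $W^*a_0W=a_0+\sqrt N$ leaves in the grand-canonical setting a residual pure condensate linear term: collecting the contributions from $-\mu a_0^\dagger a_0$ and $\tfrac{\kappa\hat V(0)}{2N}(a_0^\dagger)^2a_0^2$ after the shift, one finds $(\kappa\hat V(0)-\mu)\sqrt N\,(a_0+a_0^\dagger)$. Since $\mu=8\pi a < \kappa\hat V(0)$ whenever the second Born term in \eqref{sclBorn} is nonzero, this coefficient does \emph{not} vanish. It annihilates the vacuum, so your upper bound $\langle U\Omega, H_\mu U\Omega\rangle$ is unaffected, but on a general $\phi$ in the lower bound it can be as negative as $-O(\sqrt N)$, destroying the $O(1)$ error claim. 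You would either have to shift by $z_0\sqrt N$ with $z_0^2=\mu/(\kappa\hat V(0))$ (which then changes $C_N$ and has to be redone consistently), or complete the square in the $\phi_0$ mode against the quartic $\tfrac{\kappa\hat V(0)}{2N}(a_0^\dagger)^2a_0^2$, neither of which you mention. In the paper's number-conserving framework no such term ever appears: $a_0,a_0^\dagger$ only occur in the combination $a_0^\dagger a_0 = \N-\N_+$, so the linear problem simply does not arise.

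Two smaller remarks. Your claimed estimate $\pm\mathcal R_3 \le \delta\sum_p p^2 a_p^\dagger a_p + C\delta^{-1}\kappa\N$ should in fact involve $Q_4$; the correct Cauchy--Schwarz in position space gives $|\langle\xi,Q_3^W\xi\rangle| \lesssim \sqrt\kappa\,\langle\xi,Q_4\xi\rangle^{1/2}\|\N_+^{1/2}\xi\|$, where the $\sqrt N$ from the Weyl shift cancels against $\|V_N\|_1^{1/2}\sim N^{-1/2}$. That part of your heuristic is repairable. Finally, in your framework the quartic is dumped into $\mathcal R$ together with the cubic errors; to run the Cauchy--Schwarz you sketch you must keep $Q_4$ as a separate positive resource, exactly as the paper does by isolating $H_1+Q_4$ in \eqref{expH_error_decomp}.
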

In the proof of Theorem \ref{energyBound} we use the fact, that the number operator $\N$ commutates with the Hamiltonian, i.e. $[H_\mu, \N]=0$, which allows us to restrict the determination of the ground state energy as well as the proof of BEC to eigenfunctions $\psi$ of the number operator, $\N\psi = n\psi$. 
\begin{remark}
As a consequence of the proof of Theorem \ref{energyBound} we see that any approximate ground state $\psi_n $, with fixed particle number $n$, $\N \psi_n = n \psi_n$,
whose energy is $O(1)$ away from $E_\mu(N)$, i.e.,  $  \langle \psi_n, H_\mu \psi_n \rangle \leq - 4 \pi a N + O(1),$ satisfies 
$ |n/N - 1| \leq  O(1/\sqrt{N}),$ meaning that a-priori $n$ equals the external parameter $N$ only up to an error of $\sqrt{N}$. 
\end{remark}
%In the proof of Theorem \ref{energyBound} we use that the number operator commutes with the Hamiltonian, i.e., $[H_\mu, \N]=0$, allowing us to restrict to eigenfunctions of  the number operator, which simplifies the analysis, and the proof of BEC in the ground state of $H_\mu$. 
\begin{theorem}\label{BEC} 
Let $\mu = 8\pi a$, and let $\psi \in \F$ be  normalized, with $\N \psi = n \psi$ and satisfy $$ \langle \psi, H_\mu \psi \rangle \leq - 4 \pi a N + O(1).$$
Then,  for $\kappa$ small enough,
\begin{equation}
\langle \psi, \N_+ \psi \rangle  \leq O(1).
\end{equation}
Since,
$$
\langle\phi_0, \gamma_\psi  \phi_0 \rangle = \langle \psi, a_0^\dagger a_0 \psi \rangle = \langle \psi, (\N - \N_+) \psi \rangle = n - \langle \psi, \N_+ \psi \rangle = n + O(1),$$
this implies complete BEC. 
\end{theorem}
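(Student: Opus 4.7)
The plan is to replay the argument behind Theorem~\ref{energyBound} at the operator level, turning the numerical lower bound on $E_\mu(N)$ into positivity statements that pin down both $\langle\psi,\N_+\psi\rangle$ and $n$. Because $[H_\mu,\N] = 0$, one may restrict to the sector $\N = n$ throughout, as already noted in the discussion preceding the statement.

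The key input from the proof of Theorem~\ref{energyBound} is a unitary $U$ obtained by composing a Weyl displacement $W_{\sqrt{N}}$ of the zero mode with a Bogoliubov rotation $T$ of the $p \neq 0$ modes. With $\mu = 8\pi a$ chosen through the Born series~\eqref{sclBorn}, this conjugation realises an operator inequality of the form
\begin{equation*}
U^* H_\mu U \geq -4\pi a N + c\,\N_+ + \mathcal{R} - C,
\end{equation*}
for constants $c, C > 0$ independent of $N$ (for $\kappa$ small) and a non-negative residual quadratic form $\mathcal{R}$ on the zero mode that penalises fluctuations of $a_0^\dagger a_0$ around its condensate value $N$. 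Testing this inequality on $\tilde\psi := U^*\psi$ against the hypothesis yields $c\,\langle\tilde\psi,\N_+\tilde\psi\rangle + \langle\tilde\psi,\mathcal{R}\tilde\psi\rangle = O(1)$. Transporting the $\N_+$ bound back to $\psi$ uses that $W_{\sqrt{N}}$ leaves $\N_+$ invariant, and that $T^*\N_+ T \leq \N_+ + C'$, with $C'$ controlled by $\sum_{p\neq 0}\sinh^2\theta_p$, which is finite for the Bogoliubov kernel used there. This gives $\langle\psi,\N_+\psi\rangle = O(1)$.

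The particle-number bound $n = N + O(1)$ then comes from feeding the BEC information back into the energy bound at the level of the zero mode. On the sector $\N = n$, the identity $a_0^\dagger a_0 = n - \N_+$ yields $\langle\psi, a_0^\dagger a_0 \psi\rangle = n + O(1)$, while the zero-mode piece of $H_\mu$, namely $-\mu\, a_0^\dagger a_0 + \tfrac{\kappa \hat V_N(0)}{2}\, a_0^{\dagger 2} a_0^2$, is convex as a function of $a_0^\dagger a_0$ and minimized at $N + O(1)$ with minimum $-4\pi a N + O(1)$, precisely because $\mu = 8\pi a$ incorporates the next-order Born correction in~\eqref{sclBorn}. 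Coupling to the excited modes contributes only $O(1)$ once $\langle\N_+\rangle = O(1)$, so $n$ is forced to lie within $O(1)$ of $N$. I expect the main obstacle to be the quantitative strength of this last step: a naive lower bound of the form $\mathcal{R} \gtrsim (a_0^\dagger a_0 - N)^2 / N$ only produces $|n - N| = O(\sqrt{N})$, so upgrading to $O(1)$ requires carefully exploiting the second-order Born cancellation in~\eqref{sclBorn} to eliminate the linear part of the effective zero-mode potential. This is where the grand-canonical formulation together with the definition~\eqref{sclBorn} of the scattering length earns its keep.
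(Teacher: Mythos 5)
Your proposal uses a genuinely different unitary from the paper — a Weyl displacement $W_{\sqrt N}$ composed with a quadratic Bogoliubov rotation $T$ — whereas the paper conjugates by the \emph{number-conserving} quartic $e^{\B}$ of \eqref{defB}, with $[\B,\N]=0$. This is not a cosmetic difference: because $\B$ commutes with $\N$, the identity $\N e^{-\B}\psi = n e^{-\B}\psi$ holds, so the leading term $4\pi a_N\frac{\N(\N-1)}{N}-\mu\N$ in \eqref{expH_error_decomp} becomes a $c$-number on $\xi=e^{-\B}\psi$, giving the clean identity \eqref{equ:1}. After a Weyl displacement the sector restriction $\N\psi=n\psi$ that you invoke at the outset no longer persists, so the zero-mode piece stays an operator and this scalar reduction is lost. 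Likewise, the conjugation bound you need is not of the Bogoliubov type $T^*\N_+T\leq\N_+ + \sum_p\sinh^2\theta_p$; the paper's Lemma~\ref{N_+selfbound} is a Gronwall estimate on the non-quasi-free flow $e^{t\B}$ restricted to $\chi_{\N\le 10N}\F$.

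The more serious gap is the argument for $n=N+O(1)$. You minimize the bare zero-mode energy $-\mu m + \frac{\kappa\hat V_N(0)}{2}m(m-1)$ over $m=a_0^\dagger a_0$ and assert the minimizer sits at $N+O(1)$ ``because $\mu=8\pi a$ incorporates the Born correction.'' But the minimizer is $m^*\approx \mu N/(\kappa\hat V(0))$, and from \eqref{sclBorn} one has $\kappa\hat V(0)>8\pi a=\mu$, so $m^*<N$ with $N-m^*=O(\kappa N)$, not $O(1)$. The renormalization from $\kappa\hat V(0)$ down to $8\pi a_N$ is produced precisely by the $[Q_2,B]$ commutator in Lemma~3 and its contraction \eqref{CQ2B}; it is only for the \emph{renormalized} coefficient $8\pi a_N\to 8\pi a=\mu$ that $m^*$ coincides with $N$. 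In fact no separate zero-mode minimization is needed: once the conjugated operator is written as in \eqref{expH_error_decomp} and the errors are absorbed via Lemmas~\ref{error}--\ref{scat}, the completed square $4\pi a N\left[\frac nN-1\right]^2-4\pi a N-C+\frac12\langle\xi,H_1\xi\rangle$ in \eqref{end} simultaneously pins $n$ near $N$ and yields $\langle\psi,e^{-\B}\N_+ e^{\B}\psi\rangle\le O(1)$, and Lemma~\ref{N_+selfbound} transports the latter to $\langle\psi,\N_+\psi\rangle\le O(1)$. Your observation that the square only gives $|n-N|=O(\sqrt N)$ is correct as stated, but this does not alter the BEC conclusion since $\langle\psi,\N_+\psi\rangle=O(1)$ is the bound that controls depletion.
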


More precisely, the highest eigenvalue of the one particle density matrix $\gamma_\psi$ of any approximate ground state $\psi$, with $\N\psi=n\psi$
is macroscopically occupied, 
implying BEC with optimal error bounds in terms of $n$. Let us recall that in terms of creation and annihilation operators the one particle density matrix $\gamma_\psi$ can be expressed via the matrix elements $\langle \psi, a^\dagger_p a_q \psi \rangle$, with $p,q \in 2\pi \Z^3$, i.e., $\hat \gamma_\psi(p,q) = \langle \psi, a^\dagger_p a_q \psi \rangle.$
\begin{remark}\label{psiN}
As easy consequence of the proof  of Theorem \ref{energyBound} we will see that one can find a state $\Psi_N$,  with $\N \psi_N = N \psi_N$, and 
$$\langle \psi_N, H_\mu \psi_N\rangle \leq -4 \pi a N + O(1) ,$$
such that the corresponding one-particle density matrix $\gamma_{\psi_N}$ satisfies 
$$\langle \phi_0, \gamma_{\psi_N} \phi_0 \rangle  = N + O(1),$$
with optimal rate in the parameter $N$. 
\end{remark}

As corollary of Theorem \ref{energyBound} and Remark \ref{psiN} we immediately obtain the energy asymptotic of the ground state energy of an $N$-particle system.
Recall, $$
H_N^N = \sum_{i=1}^N -\Delta_i + \kappa \sum_{i<j}^N V_N(x_i-x_j).
$$
\begin{corollary}\label{EN}
Let $E_N = \inf {\rm{spec} } \, H^N_N$, then
$$ E_N = 4 \pi a N + O(1).$$
\end{corollary}
\begin{proof}
The upper bound is provided in the proof of Theorem \ref{energyBound}, cf. Remark \ref{psiN}, by an explicit trial state with particle number $N$. 
The lower bound follows from a simple variational argument \cite{BSo}. Let  $\psi_N$ be the ground state of $H_N^N$. Then, ($\mu = 8\pi a$),
$$ E_N = \langle\psi_N,H_N^N \psi_N\rangle = \langle\psi_N,H_N \psi_N\rangle= 8 \pi a N + \langle \psi_N,H_\mu \psi_N\rangle \geq 8 \pi a N + E_\mu \geq 4\pi a N + O(1),$$
using Theorem \ref{energyBound}. 
\end{proof}

\begin{remark} \label{energyBound2} 
The proof of Theorem \ref{energyBound} can easily be extended to general values of the chemical potential $\mu >0$. 
Indeed, for any $\mu > 0$ and $\kappa$ small enough one gets
$$
E_\mu(N) = - \frac{\mu^2}{16\pi a} N + O(1) ,
$$
as $N$ tends to infinity. 
For approximate ground states  $\psi \in \F$, with $\N \psi = n \psi$, and $$ \langle \psi, H_\mu \psi \rangle \leq - \frac{\mu^2}{16\pi a} N + O(1)$$
one obtains again complete condensation $\langle \psi, \N_+ \psi \rangle  \leq O(1)$, however, the expectation number of particles is now 
$n = \frac{N \mu}{8 \pi a} +O(\sqrt{N})$. 
\end{remark}
In the following section \ref{strategy} we present the main steps of the proof of Theorem \ref{energyBound} and complete the proof of BEC in section \ref{BEC-sec}.
The rest of the paper is concerned with technical estimates which are not important for the understanding of the main ideas of the proof. 

\section{Strategy and main steps of the proofs}\label{strategy}

Notice that the Hamiltonian $H_\mu$ and the number operator $\N$ commute, which tells us that we can restrict to states with 
fixed quantum number $\N \psi = n\psi$. Following ideas from Brietzke and Solovej \cite{BSo} we can restrict our attention to the case where 
$\N \leq 10 N$, with $10$ being chosen for aesthetic reasons (anything larger than $4$ would do). 

The key observation from \cite{BSo} is that whenever there are more than $10N$ particles one can combine them in groups with each group consisting of a number of particles 
between $5N$ and $10N$. Since the interaction is positive, one can simply drop the interaction between different groups for a lower bound. Since we will further show that the energy of a system with more than  $5N$ particles is actually nonnegative, this  tells us 
that we can restrict from the very beginning to $\N \leq 10N$. Notice that in the grand canonical case with positive chemical potential it is easy to see that the ground state must necessarily be negative. 

Under the assumption of $\N \leq 10N$ we are now in the position to apply the strategy developed by Boccato, Brennecke, Cenatiempo, and Schlein \cite{BBCS0,BBCS1,BBCS}, based on ideas of \cite{BS}. 
%We follow the strategy of BBCS and 
%introduce a Bogolubov transformation $e^B$, with  $$B = \frac{1}{2}\sum_{p \neq 0} \vphi_p [a_p^\dagger a_{-p}^\dagger - a_{-p}a_p].$$ 

We will look for an appropriate unitary rotation $e^{\B}$, with $\B = B - B^*$ a number conserving operator on the Fock space, which encodes the ground state, 
 in the sense that
$e^{-\B} H_\mu e^{\B}$ has, to leading order, $\Pi_{i=1}^N\phi_0$ as approximate ground state. This further implies that 
$\psi~\simeq e^{\B} \Pi_{i=1}^N\phi_0$ is an approximate minimizer for $H_\mu$. 

First we follow Bogolubov's way and decompose the interaction potential in different terms depending on the number of $a_0$'s and $a^\dagger_0$'s:
\begin{multline}
\frac{\kappa}2 \sum_{p,q,r} \hat{V}_N(r)a_{p+r}^\dagger a_q^\dagger a_p a_{q+r}
= 
\frac{\kappa}2 \hat{V}_N(0) a^\dagger_0 a^\dagger_0 a_0 a_0 + 
\kappa\sum_{p\neq 0} \hat{V}_N(0) a_p^\dagger a_p a^\dagger_0 a_0  \\ + \kappa \sum_{r\neq 0} \hat{V}_N(r)a_r^\dagger a_r a^\dagger_0 a_0  +  \frac{\kappa}{2} \sum_{r\neq 0} \hat{V}_N(r) \left[a_r^\dagger a_{-r}^\dagger a_0 a_0  + a_{-r}a_r a^\dagger_0 a^\dagger_0 \right] + \\
\kappa \sum_{q,r,q+r\neq 0} \hat{V}_N(r)\left[a_{q+r}^\dagger a_{-r}^\dagger a_{q} a_0  + a_q^\dagger a_{-r} a_{q+r} a_0^\dagger \right] +  \frac{\kappa}{2}\sum_{p,q \neq 0, r \neq -p, r\neq -q} \hat{V}_N(r) a_{p+r}^\dagger a_q^\dagger a_p a_{q+r} 
\end{multline}
Let us denote the number operator counting the number of particles in the state  $\phi_0$ as
$$ \N_0 = a^\dagger_0 a_0=\N - \N_+.$$

Since $$ a^\dagger_0 a^\dagger_0 a_0 a_0 = \N_0(\N_0 - 1) = (\N - \N_+)(\N - \N_+ -1) = \N(\N - 1) - \N_+(2 \N - 1) + \N_+^2,$$ 
we can 
rewrite the Hamiltonian 
$H_\mu$ in the form 
$$ H_\mu = H_0(\mu) + H_1 + H_2 + Q_2 + Q_3 + Q_4,$$
where 
$$H_0(\mu) = \frac{\kappa \hat{V}(0) }{2N}  \mathcal{N} (\mathcal{N} - 1 )  - \mu \mathcal{N}, \qquad H_1 = \sum_{p \neq 0} p^2  a_p^\dagger a_p,$$ 
$$ H_2 = \frac{\kappa}{N} \sum_{p\neq 0} \hat V(p/N) a^\dagger_p a_p(\N - \N_+)  - \frac {\kappa \hat V(0) }{2N} \N_+(\N_+ -1) $$
The rest of the interaction then has the form
\begin{align*}
Q_2 = & \frac{\kappa}{2 N}\sum_{p \neq 0} \hat{V}(p/N) [a_p^\dagger a_{-p}^\dagger a_0 a_0 + h.c.] \\
Q_3 = & \frac {\kappa}{N} \sum_{q,r,q+r\neq 0} \hat{V}(r/N)\left[a_{q+r}^\dagger a_{-r}^\dagger a_{q}a_0 + h.c.  \right]   \\
Q_4 = & \frac{\kappa}{2 N}\sum_{p,q \neq 0, r \neq -p, r\neq -q} \hat{V}(r/N) a_{p+r}^\dagger a_q^\dagger a_p a_{q+r} 
\end{align*}

In the following we will assume that  $a_p$ or $a^\dagger_p$ automatically means that $p\neq 0$ which allows us to skip 
the distinctions in the sums. 
E. g., instead of $\sum_{p \neq 0} a^\dagger_p a_p $ we simply write  $\sum_{p} a^\dagger_p a_p $. 

Let us recall Duhamel's formula
$$
e^{-B}Ae^B = A + \int_0^1 e^{-sB} [A,B]e^{sB}\ds  =A + [A,B] + \int_0^1 \int_0^s e^{-tB} [[A,B],B] e^{tB} \dt \ds.
$$
Applying the second equality of the formula to $H_1 + Q_4$ and the first to $Q_2$  we obtain 
\begin{align}\label{BHm}
e^{-\B}H_\mu e^\B &= e^{-\B} [H_0(\mu) + H_1 + Q_4 + Q_2 + H_2 + Q_3]e^\B 
\\ \nonumber
&= H_0(\mu) + H_1 + Q_4 + [H_1 + Q_4,\B] + Q_2 + e^{-\B}(H_2 + Q_3)e^\B  
\\ \nonumber
& \qquad +  \int_0^1 \int_0^s e^{-t\B}\Big[ [H_1+Q_4,\B],\B \Big] e^{t\B} \dt \ds  + \int_0^1 e^{-s\B}[Q_2,\B]e^{s\B} \ds .
\end{align}
Let us explain the main idea of the strategy. The term $H_0(\mu)$ clearly contributes to the leading term. In Bogolubov's original approach
$Q_3$ and $Q_4$ was omitted and Bogolubov \cite{bog} diagonalized the quadratic part $H_1 + H_2 + Q_2$. But he did not get the leading term correctly, since he missed the contribution coming from $Q_4$. We perform an ``almost'' diagonalization 
by  choosing $\B$ in such a way that $$[H_1 + Q_4,\B] + Q_2 \simeq 0.$$  
We will treat $H_2$ and $Q_3$ as error terms, since they do not contribute to the leading order. 
The requirement that  $[H_1 + Q_4,\B] + Q_2$ vanishes apart from higher order terms, suggests a choice of $\B$, of the form
\begin{equation}\label{defB}
 \B = \frac 1{2N} \sum_p \vphi_p [ a^\dagger_p a^\dagger_{-p} a_0 a_0 - \rm{h.c.}] = B - B^*,
\end{equation}
with  $\vphi_p$ appropriately chosen. In fact, to leading order,  $\vphi_p$ will satisfy the scattering equation.
\begin{lemma}
Let $\B$ be defined as in \eqref{defB}.
If $\vphi_p$ satisfies the equation 
\begin{equation} \label{scattequ}
p^2 \vphi_p + \frac{\kappa}{2N}\sum_q \hat{V}\left((p-q)/N\right)  \vphi_q = -\frac{\kappa}{2} \hat{V}(p/N),
\end{equation} 
then 
\begin{equation}\label{realscattequ}
[H_1(\mu) + Q_4,\B] = -Q_2 + \Gamma,
\end{equation}
with 
\begin{align*}
\Gamma &= \frac \kappa {N^2}  \sum_{r,p,q} \hat{V}(r/N) \vphi_p[a_{p+r}^\dagger a_q^\dagger a_{-p}^\dagger a_{q+r}a_0a_0 + h.c.] \end{align*}
\end{lemma}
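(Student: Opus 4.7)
The plan is to verify the identity by a direct computation with the canonical commutation relations. I will first use the self-adjointness of $H_{1}$ and $Q_{4}$ together with $\B=B-B^{*}$ to reduce the problem to computing $[H_{1}+Q_{4},B]$; adjoining the Hermitian conjugate will then generate the ``$h.c.$'' halves of $Q_{2}$ and $\Gamma$. Two structural observations will simplify the work considerably: every operator in $Q_{4}$ lives on non-zero modes, so it commutes with the two $a_{0}$ factors in $B$; and the creation operators $a_{p+r}^{\dagger}a_{q}^{\dagger}$ of $Q_{4}$ commute with $a_{p'}^{\dagger}a_{-p'}^{\dagger}$ of $B$.

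Using $[H_{1},a_{p}^{\dagger}]=p^{2}a_{p}^{\dagger}$ and $[H_{1},a_{0}]=0$, the first line gives
\[
[H_{1},B]=\frac{1}{N}\sum_{p}p^{2}\varphi_{p}\,a_{p}^{\dagger}a_{-p}^{\dagger}a_{0}a_{0}.
\]
For $[Q_{4},B]$, the only nontrivial object will be $a_{p+r}^{\dagger}a_{q}^{\dagger}[a_{p}a_{q+r},a_{p'}^{\dagger}a_{-p'}^{\dagger}]a_{0}a_{0}$, and expanding the inner bracket by the CCR will produce four \emph{single-contraction} terms (one per Kronecker delta $\delta_{p,\pm p'}$ or $\delta_{q+r,\pm p'}$) together with two \emph{double-contraction} terms $\delta_{q+r,p'}\delta_{p,-p'}$ and $\delta_{q+r,-p'}\delta_{p,p'}$. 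After summing against $\varphi_{p'}/(2N)$ and using the evenness $\varphi_{-p}=\varphi_{p}$ (inherited from the scattering equation since $\hat V$ is even) together with the symmetry of $a_{p}^{\dagger}a_{-p}^{\dagger}$ under $p\to -p$, the single-contraction terms should split cleanly into two families matching the statement: the pair $\delta_{p,\pm p'}$ will assemble into the $\varphi_{p}$-weighted piece of $\Gamma_{1}$, and the pair $\delta_{q+r,\pm p'}$ into the $\varphi_{q+r}$-weighted piece of $\Gamma_{2}$.

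The double-contraction terms should consolidate, after the same symmetrizations and the substitution $s=q+r$, into
\[
\frac{1}{N}\sum_{p}\Bigl(\frac{\kappa}{2N}\sum_{s}\hat V\bigl((p-s)/N\bigr)\varphi_{s}\Bigr)a_{p}^{\dagger}a_{-p}^{\dagger}a_{0}a_{0}.
\]
Adding $[H_{1},B]$ and invoking the scattering equation \eqref{scattequ} to replace the parenthesized convolution by $-p^{2}\varphi_{p}-\tfrac{\kappa}{2}\hat V(p/N)$ will cancel the $p^{2}\varphi_{p}$ contributions and leave exactly $-\tfrac{\kappa}{2N}\sum_{p}\hat V(p/N)\,a_{p}^{\dagger}a_{-p}^{\dagger}a_{0}a_{0}$, i.e.\ the $B$-half of $-Q_{2}$. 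Taking the Hermitian conjugate of the whole calculation will then restore the missing $h.c.$ halves of $Q_{2}$, $\Gamma_{1}$, and $\Gamma_{2}$, delivering the claimed identity.

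The genuine obstacle is combinatorial bookkeeping: tracking the four Kronecker deltas, the two independent sign choices $p'=\pm p$ and $p'=\pm(q+r)$, and the relabelings required to recognize the scattering convolution. No analytic estimate enters — the identity is purely algebraic — and with hindsight the scattering equation \eqref{scattequ} is chosen precisely because it is the equation that forces the cancellation between $[H_{1},B]$ and the double-contraction piece of $[Q_{4},B]$.
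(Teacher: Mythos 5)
Your proposal is correct and follows essentially the same route as the paper: compute $[H_1,B]$ and $[Q_4,B]$ directly via the CCR, collect the quadratic $a_p^\dagger a_{-p}^\dagger a_0 a_0$ contributions, and invoke the scattering equation \eqref{scattequ} to turn them into $-Q_2$, while the cubic contributions give $\Gamma_1+\Gamma_2$; the h.c.\ halves follow from $\B=B-B^*$ and self-adjointness of $H_1+Q_4$. The paper states the two commutator results and the conclusion without exhibiting the delta-function bookkeeping; your additional layer (the four single-contraction deltas $\delta_{p,\pm p'}$, $\delta_{q+r,\pm p'}$ producing $\Gamma_1,\Gamma_2$ after exploiting $\varphi_{-p}=\varphi_p$, and the two double-contraction deltas producing the convolution $\tfrac{\kappa}{2N}\sum_q\hat V((p-q)/N)\varphi_q$) is an accurate and helpful unpacking of what the paper leaves implicit.
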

\begin{proof}
Straightforward calculations yield
$$
[H_1,\B] = \frac 1N \sum_{p } p^2 \vphi_p [a_p^\dagger a_{-p}^\dagger a_0 a_0  + h.c.]
$$
and 
\begin{multline}
[Q_4, \B] = \frac{\kappa}{2N^2} \sum_{p,q} (\hat{V}((p-q)/N) \vphi_q a_{p}^\dagger a_{-p}^\dagger a_0a_0 
\\+ \frac{\kappa}{2N^2} \sum_{r,p,q} \hat{V}(r/N) \left(\vphi_pa_{p+r}^\dagger a_q^\dagger a_{-p}^\dagger a_{q+r}a_0 a_0 +  
 \vphi_{q+r} a_{p+r}^\dagger a_q^\dagger a_{-q-r}^\dagger a_p a_0 a_0 \right) + h.c.,
\end{multline}
where the two terms in the last line are actually equal, which can be seen by changing variables, $p \to p-r$, $q\to q-r$ and then $r \to -r$.  
Collecting all terms involving  $a_p^\dagger a_{-p}^\dagger a_0 a_0$ and recalling the form of $Q_2$ 
we see that  
$$[H_1 + Q_4,\B] = -Q_2 + \Gamma$$ is satisfied if 
 $\vphi_p$ solves the equation \eqref{scattequ}. 
\end{proof}
\begin{remark}
Let us apply the 
discrete inverse Fourier transform 
\begin{equation}\label{invfour}
P_0^\perp \check \vphi(x) = \frac 1{N^3} \sum_p e^{\ii \frac pN \cdot x } \vphi_p, \quad  P_0^\perp \check {\hat V}(x) = \frac 1{N^3} \sum_p e^{\ii \frac pN \cdot x } \hat V(p/N), 
\end{equation}
where $P_0$ is the projection on the constant function $\phi_0$, and the orthogonal projection comes about because all sums run over $p \neq 0$. 
Applying this transformation to 
equation \eqref{scattequ} and assume that $N$ is large enough that $\check {\hat V} (x) = V(x)$ on $[-N/2,N/2]$, then we obtain the equation
\begin{equation}\label{scattequ2}
P_0^\perp (-\Delta + \frac \kappa 2 V(x) ) P_0^\perp \check \vphi(x) = -\frac{\kappa}{2N^2} P_0^\perp V(x),
\end{equation} 
which can be inverted by 
\begin{equation}\label{deffi}
P_0^\perp  \check \vphi(x) = - \frac{\kappa}{2N^2}  \frac 1{P_0^\perp(-\Delta + \frac \kappa 2 V)P_0^\perp } P_0^\perp V (x).
\end{equation}
Among others this shows that equation \eqref{scattequ} has a unique solution $\vphi_p$. Useful properties of this function $\vphi_p$ are provided
in Lemma \ref{Lemmavphi}. 
\end{remark}

We continue with equation \eqref{BHm}. We plug  $[H_1+ Q_4,\B] = -Q_2 + \Gamma$ into the last two terms in \eqref{BHm} and obtain
\begin{align*}
&\int_0^1 \int_0^s e^{-t\B}\Big[ [H_1+Q_4,\B],\B \Big] e^{t\B} \dt \ds + \int_0^1 e^{-t\B}[Q_2,\B]e^{t\B} \dt
\\
&= - \int_0^1 \int_0^s e^{-t\B}[Q_2,\B] e^{t\B} \dt \ds + \int_0^1 \int_0^1 e^{-t\B}[Q_2,\B]e^{t\B} \dt \ds  + \int_0^1 \int_0^s e^{-t\B}[\Gamma,\B] e^{t\B} \dt\ds
\\
&=  \int_0^1 \int_s^1 e^{-t\B}[Q_2,\B]e^{t\B} \dt \ds +  \int_0^1 \int_0^s e^{-t\B}[\Gamma,\B] e^{t\B} \dt\ds.
\end{align*}
Rewriting \eqref{BHm} accordingly we arrive at
\begin{align} \label{expH_contracted}\nonumber
e^{-\B}H_\mu e^\B &= H_0(\mu)+\int_0^1 \int_s^1 e^{-t\B}[Q_2,\B]e^{t\B} \dt \ds + H_1 + Q_4 \\ &+  \Gamma + \int_0^1 \int_0^s e^{-t\B}[\Gamma,B] e^{t\B} \dt\ds 
 + e^{-\B}(H_2 + Q_3)e^\B.
\end{align}
The idea now is rather simple. The leading order is contained in the first two terms on the right hand side, $$H_0(\mu)+\int_0^1 \int_s^1 e^{-t\B}[Q_2,\B]e^{t\B} \dt \ds.$$
The positive term $H_1 + Q_4$ is used to dominate the error terms coming from the rest. Here the smallness of the potential $\kappa$ will be used. Thanks to the gap, 
all errors of the form $\kappa C \N_+$ will be absorbed by $H_1$ for small enough $\kappa$. 
In the following we extract the leading contribution of the term $\int_0^1 \int_s^1 e^{-t\B}[Q_2,B]e^{t\B} \dt \ds.$
\begin{lemma}One has 
\begin{equation}\label{CQ2B}
 \int_0^1 \int_s^1 e^{-t\B}[Q_2,B]e^{t\B} \dt \ds = \frac {\kappa  \mathcal{N}(\N-1)}{2N^2} \sum_p \hat{V}(p/N) \vphi_p + \Xi,
 \end{equation}
 with 
\begin{multline}
 \Xi = -\frac{\kappa} {N^2} \sum_p  \hat{V}(p/N) \vphi_p   \int_0^1 \int_s^1 e^{-t\B} \N_+(2 \N - \N_+ -1 )e^{t\B}\dt \ds \\ +  
 \frac{2\kappa} {N^2}  \sum_p \hat{V}(p/N) \vphi_p \int_0^1 \int_s^1 e^{-t\B}\N_0(\N_0 -1) a^\dagger_p a_pe^{t\B} \dt \ds  \\
 -   \frac{\kappa}{N^2} \sum_{p,q}  \hat{V}(p/N) \vphi_q \int_0^1 \int_s^1 e^{-t\B}  a^\dagger_{p} a^\dagger_{-p} a_{-q} a_{q} (1 + 2 \N_0)  e^{t\B} \dt \ds.
\end{multline}
\end{lemma}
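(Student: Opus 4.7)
The plan is to compute $[Q_2, \B]$ directly by canonical commutation relations, then separate out the part that commutes with $\B$ (so the conjugation by $e^{\pm t\B}$ acts trivially and can be pulled outside the integral), after which the $s,t$-integrations collapse via the elementary identity $\int_0^1\!\int_s^1 \dt\,\ds = \tfrac12$.

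It is convenient to write $Q_2 = \frac{\kappa}{2N}\sum_p \hat V(p/N)(b_p^\dagger + b_p)$ and $\B = \frac{1}{2N}\sum_p \vphi_p(b_p^\dagger - b_p)$ with the pair operators $b_p^\dagger := a_p^\dagger a_{-p}^\dagger a_0 a_0$. Because $b_p^\dagger b_q^\dagger = b_q^\dagger b_p^\dagger$ and $b_p b_q = b_q b_p$ (the $\pm p$ indices commute pairwise and the $a_0$-sector is separated), only mixed commutators survive, so the whole problem reduces to a single computation of $[b_p,b_q^\dagger]$. The $a_0$-sector contributes through $a_0^\dagger a_0^\dagger a_0 a_0 = \N_0(\N_0-1)$ and $a_0 a_0 a_0^\dagger a_0^\dagger = (\N_0+1)(\N_0+2)$, whose difference equals $-2(2\N_0+1)$; meanwhile, in the nonzero-momentum sector, reordering yields $[a_{-p}a_p,\,a_q^\dagger a_{-q}^\dagger] = (\delta_{p,q}+\delta_{p,-q})(1 + a_p^\dagger a_p + a_{-p}^\dagger a_{-p})$. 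Combining,
\[
[b_p,b_q^\dagger] = -2(2\N_0 + 1)\, a_q^\dagger a_{-q}^\dagger a_{-p}a_p + (\delta_{p,q}+\delta_{p,-q})(1 + a_p^\dagger a_p + a_{-p}^\dagger a_{-p})\,\N_0(\N_0-1).
\]
Substituting into $[Q_2,\B]$ and using $\vphi_{-p}=\vphi_p$, $\hat V(-p/N)=\hat V(p/N)$ to collapse the $\delta$-sums produces three structural pieces: a scalar multiple of $\N_0(\N_0-1)$, a piece of the form $\sum_p \hat V(p/N)\vphi_p\, a_p^\dagger a_p\,\N_0(\N_0-1)$, and a genuinely quartic piece $\sum_{p,q}\hat V(p/N)\vphi_q\, a_p^\dagger a_{-p}^\dagger a_{-q}a_q\,(\text{polynomial in }\N_0)$.

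Now the identity $\N_0(\N_0-1) = \N(\N-1) - \N_+(2\N-1) + \N_+^2$ recorded earlier in the paper is applied to the scalar piece. Since $[\B,\N]=0$, the $\N(\N-1)$-part commutes with $e^{t\B}$ and is pulled out of the integral, producing precisely $\frac{\kappa \N(\N-1)}{2N^2}\sum_p \hat V(p/N)\vphi_p$ after the factor $\tfrac12$ from $\int_0^1\!\int_s^1\dt\,\ds$. The $\N_+$-remainder then becomes the first line of $\Xi$, the $a_p^\dagger a_p\,\N_0(\N_0-1)$ correction becomes the second, and the quartic piece becomes the third; since none of these commute with $\B$ they must stay inside the double integral with the conjugation intact.

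The only real obstacle is the careful bookkeeping of combinatorial factors: the prefactors $\tfrac{1}{2N}$ in $B$ and $Q_2$ must be tracked together with the $p\leftrightarrow -p$ symmetry of $\vphi$ and $\hat V$ (which doubles the $\delta$-contributions, since both $\delta_{p,q}$ and $\delta_{p,-q}$ fire), and with the symmetry between $[b_p,b_q^\dagger]$ and $[b_q,b_p^\dagger]$ induced by $[b_p^\dagger + b_p,\,b_q^\dagger - b_q]$. With these consistently accounted for, the three terms of $\Xi$ and the main term drop out of the computation exactly as stated.
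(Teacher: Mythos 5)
Your proposal is correct and follows essentially the same route as the paper: compute $[Q_2,\B]$ via the canonical commutation relations (your pair-operator notation $b_p^\dagger = a_p^\dagger a_{-p}^\dagger a_0 a_0$ just repackages what the paper writes out longhand), isolate the scalar term proportional to $\N_0(\N_0-1)$, apply $\N_0(\N_0-1) = \N(\N-1)-\N_+(2\N-1)+\N_+^2$, and pull the $\N(\N-1)$-part through $e^{\pm t\B}$ since it commutes with $\B$, with $\int_0^1\!\int_s^1 dt\,ds = \tfrac12$ giving the stated prefactor. Your handling of the Kronecker deltas as $(\delta_{p,q}+\delta_{p,-q})$ is in fact cleaner and more systematic than the paper's compressed $\delta_{p,q}(2+\cdots)$ shorthand, but after exploiting the evenness of $\vphi$ and $\hat V$ under the $p,q$-sums it reproduces the same structure.
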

\begin{proof}
We  calculate
\begin{align}\nonumber
[Q_2,B] &= \frac{\kappa}{4N^2} \sum_{p,q} \hat{V}(p/N) \vphi_q [a_p^\dagger a_{-p}^\dagger a_0a_0 + a_{-p}a_pa_0^\dagger a_0^\dagger , a_q^\dagger a_{-q}^\dagger a_0a_0 - a_{-q}a_q a_0^\dagger a_0^\dagger] 
\\ \label{Q2B}
%&= \kappa \frac{\mathcal{N}_0}{4} \sum_{p,q} \hat{V}_N(p) \vphi_p \left([a_{-p} a_p,a_q^\dagger a_{-q}^\dagger] + [a_{-q} a_q,a_p^\dagger a_{-p}^\dagger]\right) 
%\\
=&  \frac{\kappa}{4N^2} \sum_{p,q} \hat{V}(p/N) \vphi_q \left( [a_{-p}a_pa_0^\dagger a_0^\dagger , a_q^\dagger a_{-q}^\dagger a_0a_0] - [a_p^\dagger a_{-p}^\dagger a_0a_0,a_{-q}a_q a_0^\dagger a_0^\dagger] \right)
\end{align}
The two terms in the bracket are hermitian conjugates. Hence it suffices to calculate the second one, 
\begin{align*}
  - [a_p^\dagger a_{-p}^\dagger a_0a_0,a_{-q}a_q a_0^\dagger a_0^\dagger] &= [a_{-q}a_q, a_p^\dagger a_{-p}^\dagger] a_0^\dagger a_0^\dagger a_0a_0 - a_p^\dagger a_{-p}^\dagger a_{-q} a_q [a_0a_0,a_0^\dagger a_0^\dagger], 
 \end{align*}
where
 $$ [a_{-p}a_p,a_q^\dagger a_{-q}^\dagger ] = (\delta_{p,q} + \delta_{p,-q})( 1 + a^\dagger_p a_p + a^\dagger_{-p} a_{-p} ),$$
 and
  $$a_0^\dagger a_0^\dagger a_0 a_0 = \N_0(\N_0-1)= \N(\N -1) - 2 \N \N_+ + \N_+(\N_+ -1), \quad [a_0a_0,a^\dagger_0a^\dagger_0] = 2(2 \N_0 + 1).$$
 Plugging into \eqref{Q2B} and integrating over $s,t$ implies the statement. Observe that for the leading term the integral over $s,t$ gives a factor $1/2$. 
\end{proof}
Let us now define 
\begin{equation}\label{defaN} 
4 \pi a_N := \frac \kappa 2 \left( \hat V(0) + \frac 1N \sum_p \hat{V}(p/N) \vphi_p \right). 
\end{equation}
With this definition and the previous Lemma we rewrite \eqref{expH_contracted} as
\begin{equation} \label{expH_error_decomp}
e^{-\B} H_\mu e^\B = 4 \pi a_N \frac{\N(\N-1) }{N} - \mu \mathcal{N}  + H_1+ Q_4 +\Xi +  \mathcal{E}
\end{equation}
where 
\begin{align*}
\mathcal{E} &=  
\Gamma 
 + \int_0^1\int_0^s e^{-t\B}[\Gamma,\B] e^{t\B}\dt \ds + e^{-\B}(H_2 + Q_3)e^\B.
\end{align*}
Let us recall that we are able to restrict to wavefunctions with fixed particle number, 
$\N \psi = n \psi$, (and additionally assume $n \leq 10N$). 
Further, 
\begin{multline}  \label{equ:1}
\langle \psi , H_\mu \psi\rangle = \langle e^{-\B} \psi , e^{-\B} H_\mu e^{\B} e^{-\B}  \psi\rangle =\\  =  4 \pi a_N \frac{n(n-1) }{N} - \mu n  
 +\langle e^{-\B} \psi , ( H_1+ Q_4 +\Xi +  \mathcal{E}) e^{-\B}  \psi\rangle,
\end{multline} 
using that $$\N e^{-\B} \psi = e^{-\B} \N \psi = n e^{-\B} \psi.$$
The following lemma, which was proven in \cite{BBCS0}, tells us that the error terms $\E + \Xi $ can be absorbed by  $H_1 + Q_4$. 
\begin{lemma}[\cite{BBCS0}]\label{error}
Assume $\psi \in \F$ with  $\N \psi =n \psi, $ with $ n \leq 10N$ and denote $\xi = e^{-\B} \psi$. Then for $\kappa$ small enough 
\begin{equation}\label{mE}
| \langle \xi, (\E + \Xi ) \xi \rangle | \leq \frac 12 \langle \xi, (H_1 +  Q_4)  \xi \rangle. 
\end{equation} 
\end{lemma}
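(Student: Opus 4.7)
The plan is to bound $\E+\Xi$ term-by-term as an operator inequality on $n$-particle sectors with $n\leq 10N$. The three ingredients I would need are: quantitative bounds on $\vphi_p$; Gronwall-type transference estimates for $\N_+$ and $H_1+Q_4$ under conjugation by $e^{t\B}$; and Cauchy--Schwarz applied monomial by monomial. Since $[\B,\N]=0$, the rotated vector $\xi=e^{-\B}\psi$ still satisfies $\N\xi=n\xi$, so $\|a_0\xi\|\leq \sqrt{n}\|\xi\|\leq \sqrt{10N}\|\xi\|$, while for $p\neq 0$ one always has $\|a_p\xi\|\leq \|\N_+^{1/2}\xi\|$ and $\|p\,a_p\xi\|\leq \|H_1^{1/2}\xi\|$. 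These three building blocks are what let us turn each explicit summand in $\E$ and $\Xi$ into a scalar estimate.

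For the first ingredient, \eqref{deffi} together with a Neumann expansion of $(P_0^\perp(-\Delta+\tfrac{\kappa}{2}V)P_0^\perp)^{-1}$ and standard Green's function bounds yields $|\vphi_p|\leq C\kappa/|p|^2$, hence $\|\vphi\|_{\ell^\infty}, \|\vphi\|_{\ell^2}=O(\kappa)$ uniformly in $N$. For the second, $[\N_+,\B]=2\B$ follows directly from \eqref{defB}, so $e^{-t\B}(\N_++1)^k e^{t\B}\leq C_k(\N_++1)^k$ uniformly in $t\in[0,1]$ by Gronwall on the sector $\N\leq 10N$. For the kinetic--quartic piece, \eqref{realscattequ} rewrites $[H_1+Q_4,\B]=-Q_2+\Gamma$, and both $Q_2$ and $\Gamma$ can be estimated by $\tfrac14(H_1+Q_4)+C\kappa(\N_++1)$ using the $\vphi$-bounds together with the Riemann-sum bound $\sum_p |\hat V(p/N)|^2\lesssim N^3 \|V\|_{L^2}^2$; a second Gronwall argument then gives $e^{-t\B}(H_1+Q_4)e^{t\B}\leq C(H_1+Q_4)+C\kappa(\N_++1)$ uniformly in $t$.

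With these in place, each summand of $\Gamma_1,\Gamma_2,H_2,Q_3$ and $\Xi$ is a polynomial in the $a^\#_p$ carrying an explicit prefactor $\kappa/N$ or $\kappa/N^2$, with at most two $a_0$'s (contributing $O(N)$) and between two and four operators $a_p^\#$ with $p\neq 0$. Splitting each momentum sum by Cauchy--Schwarz so that one factor absorbs $\vphi$ or $\hat V$ and the other is paired with $a_p$ supplemented by the $p$'s provided by $H_1^{1/2}$, every such term is dominated by $C\kappa\langle\xi,(H_1+Q_4)\xi\rangle+C\kappa\langle\xi,(\N_++1)\xi\rangle$; the $\N_+$-piece is then absorbed using the gap $H_1\geq(2\pi)^2\N_+$, and choosing $\kappa$ small closes the bound. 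The main obstacle is the nested contribution $\int_0^1\!\int_0^s e^{-t\B}[\Gamma,\B]e^{t\B}\,\dt\,\ds$: the commutator $[\Gamma,\B]$ produces monomials of degree up to eight, with as many as four $a_0$'s and an extra factor of $\vphi$, so one has to verify that this extra $\kappa$ compensates the additional pair of $a^\#$ operators and that the transference estimates of the second ingredient apply uniformly in $t$. This is the most delicate step, but once the bounds above are established it is a routine, if tedious, book-keeping exercise in the spirit of \cite{BBCS0}.
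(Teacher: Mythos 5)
Your framework---$\vphi$-bounds, Gronwall-transference of $\N_+$, and Cauchy--Schwarz monomial by monomial---is indeed what the paper uses for $\Xi_1$, $e^{-\B}H_2 e^{\B}$ and for $\Gamma$ itself. But there is a genuine gap in your treatment of the nested terms, precisely at the point where you assert a Gronwall-type transference estimate for the kinetic--quartic part. You claim that ``both $Q_2$ and $\Gamma$ can be estimated by $\tfrac14(H_1+Q_4)+C\kappa(\N_++1)$'' and then run Gronwall on $F(t)=e^{-t\B}(H_1+Q_4)e^{t\B}$ using $F'(t)=e^{-t\B}(-Q_2+\Gamma)e^{t\B}$. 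This fails: on the sector $\N\leq 10N$ the best operator inequality for the pair-creation term is of the form
\[
 |Q_2| \;\leq\; \tfrac14\,H_1 \;+\; C\,\kappa^2 N^2 ,
\]
not $\tfrac14 H_1 + C\kappa(\N_++1)$. The additive error must scale like $\kappa^2 N^2$, not $\kappa(\N_++1)$, because $Q_2$ carries the factor $a_0 a_0 \sim N$ and contributes to the \emph{leading order} of the energy (it is exactly the term that produces the Bogoliubov correction), so it is not uniformly small in the sector. Feeding $C\kappa^2 N^2$ into the Gronwall inequality destroys the claimed bound $e^{-t\B}(H_1+Q_4)e^{t\B}\leq C(H_1+Q_4)+C\kappa(\N_++1)$, because that additive error cannot be reabsorbed by choosing $\kappa$ small (it grows with $N$) and certainly cannot be dominated by $(\N_++1)$ on states with few excitations. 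The paper explicitly flags exactly this obstruction: ``the difficulty lies in the fact that $e^{-\B}Q_4 e^{\B}$ cannot be controlled by $Q_4$ and $\N_+$.'' Your second Gronwall argument is therefore not a valid ingredient.

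The paper's way around this is different and worth noting. Rather than trying to transport $H_1+Q_4$ through the conjugation, it applies Duhamel only to an individual \emph{sub-factor} of the offending monomials. For $e^{-\B}Q_3 e^{\B}$ one writes
\[
 e^{-\B} a_{q+r}^\dagger a_{-r}^\dagger e^{\B} \;=\; a_{q+r}^\dagger a_{-r}^\dagger \;+\; \int_0^1 e^{-s\B}[a_{q+r}^\dagger a_{-r}^\dagger,\B]e^{s\B}\,ds,
\]
so the undifferentiated piece recombines into a genuine, unconjugated $Q_4$ (via the configuration-space rewriting) that can absorb a $\delta Q_4$ bound, while the commutator piece is $\N_+$-bounded inside the conjugation, where the $\N_+$-transference of Lemma~\ref{N_+selfbound} applies. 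The same device is used on the $[\Gamma,\B]$ terms. This avoids ever needing a $Q_4$-transference estimate. A minor further slip: from \eqref{defB} one has $[\N_+,\B]=2(B+B^*)$, not $2\B=2(B-B^*)$; this does not damage the $\N_+$ Gronwall step (one still bounds the Hermitian operator $B+B^*$ by $\N_++1$ after Cauchy--Schwarz), but the sign is worth fixing.
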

Furthermore, we now point out that $a_N$ converges to the scattering length, see \cite{BS, BBCS0}. 
\begin{lemma}\label{scat}
Let $a$ be the scattering length defined in \eqref{sclBorn}. Then we obtain for $a_N$, defined in \eqref{defaN}, 
\begin{equation}
|a_N - a| \lesssim 1/N. 
\end{equation}
\end{lemma}
We postpone the proof of Lemma \ref{scat}. 
Applying these two lemmata to \eqref{equ:1} we conclude that there is a constant $C$, such that 
\begin{align}\nonumber 
\langle \psi, H_\mu \psi \rangle & \geq 4 \pi a \frac {n^2 }{N} - \mu n - C + \frac 12 \langle \xi ,H_1\xi \rangle \\ \label{end}
& = 4 \pi a N \left[ \frac n N - \frac{\mu}{8 \pi a } \right]^2 - 4 \pi a N \left( \frac{\mu}{8 \pi a } \right)^2 - C + \frac 12 \langle \xi ,H_1\xi \rangle. 
\end{align}

\begin{lemma} \label{Nbigger}
With $\mu = 8\pi a$  and $\N \psi = n \psi$, and $n \in [5N,10N]$, then 
\begin{equation}
 \langle \psi, H_\mu \psi \rangle \geq 0
\end{equation}
for $N$ large enough. 
\end{lemma}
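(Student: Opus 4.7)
The plan is to substitute the hypothesis directly into the lower bound \eqref{end} established immediately before the statement. With $\mu = 8\pi a$ the quadratic term $[n/N - \mu/(8\pi a)]^2$ collapses to $(n/N - 1)^2$, and the constant shift $-4\pi aN(\mu/(8\pi a))^2$ becomes simply $-4\pi aN$. Thus \eqref{end} reads
$$
\langle \psi, H_\mu \psi\rangle \;\geq\; 4\pi a N\bigl(n/N - 1\bigr)^2 \;-\; 4\pi a N \;-\; C \;+\; \tfrac{1}{2}\langle \xi, H_1 \xi\rangle.
$$

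The key observation is that the hypothesis $n \in [5N, 10N]$ forces $n/N - 1 \geq 4$, and therefore $(n/N - 1)^2 \geq 16$. Inserting this bound together with $H_1 \geq 0$ yields
$$
\langle \psi, H_\mu \psi\rangle \;\geq\; 64\pi aN - 4\pi aN - C \;=\; 60\pi a N - C,
$$
which is nonnegative as soon as $N$ is large enough that the $\mathcal{O}(N)$ term absorbs the constant $C$. Concretely, any $N \geq C/(60\pi a)$ will do.

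The only subtlety worth checking is whether the derivation of \eqref{end} is legitimate across the full window $n \in [5N,10N]$. This derivation relies on Lemma \ref{error}, whose hypothesis is exactly $n \leq 10N$, and on Lemma \ref{scat}, which is independent of $n$; the constant $C$ in \eqref{end} does not depend on $n$ beyond the bound $n \leq 10N$. Hence the estimate applies uniformly in the stated window, and there is no real obstacle: this lemma is essentially a one-line corollary of the preceding computation, and it plays precisely the role advertised in the strategy section, namely closing off the large-particle-number branch $\mathcal{N} > 10N$ by allowing one to drop, for a lower bound, any ``group'' containing between $5N$ and $10N$ particles.
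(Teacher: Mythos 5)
Your proof is correct and follows exactly the paper's argument: plug $\mu = 8\pi a$ and $n \geq 5N$ into \eqref{end}, drop the nonnegative $H_1$-term, and use $(n/N-1)^2 \geq 16$ to obtain the lower bound $4\pi a N(16-1) - C = 60\pi a N - C > 0$ for $N$ large. Your added check that \eqref{end} is valid uniformly for $n \leq 10N$ (via Lemma~\ref{error} and Lemma~\ref{scat}) is a sensible sanity check, but otherwise this is the same one-line corollary the paper gives.
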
 
\begin{proof}
Equation \eqref{end}  implies 
$$   \langle \psi, H_\mu \psi \rangle \geq 4\pi a N\left [ ( (n/N) - 1)^2  - 1 \right] - C  \geq 4 \pi a N (4^2 -1) - C > 0,$$
for $N$ sufficiently large. 
\end{proof}
Lemma \ref{Nbigger} and equation \eqref{end} imply for $\mu = 8\pi a$ that 
$$\langle \psi, H_\mu \psi \rangle  \geq - 4 \pi a N - C + \frac 12 \langle \xi ,H_1\xi \rangle.$$
This implies the lower bound in the 
statement of Theorem \ref{energyBound}. 

The upper bound is obtained 
using the simple trial state $\tilde \psi = e^{\B} \Pi_{i=1}^N \phi_0$ plugged into \eqref{equ:1}.
This implies 
$$ E_\mu(N) \leq \langle \tilde \psi, H_\mu \tilde \psi \rangle = \langle \Pi_{i=1}^N \phi_0, e^{-\B} H_\mu e^{\B}  \Pi_{i=1}^N \phi_0 \rangle = 4 \pi a_N \frac{N(N-1)}N - 8\pi a N = - 4 \pi a N + O(1) ,$$
where  we used the simple fact $$\langle \Pi_{i=1}^N \phi_0, (H_1+ Q_4 +\Xi +  \mathcal{E}) \Pi_{i=1}^N \phi_0 \rangle =0.$$

\section{Proof of BEC} \label{BEC-sec}

Let us restrict to states $\psi \in \F$, with $\N\psi = n\psi$, which are approximate ground states, i.e., 
$$\langle \psi, H_\mu  \psi \rangle \leq - 4 \pi a N + O(1).$$
 Equation  \eqref{end}  implies for such $\psi$ and $\mu = 8\pi a$ that
\begin{equation}\label{cond}
- 4 \pi a N + O(1) \geq \langle \psi, H_\mu  \psi \rangle \geq 4 \pi a N \left[ \frac n N - 1 \right]^2 - 4 \pi a N - C + \frac 12 \langle \psi, e^{\B} \N_+ e^{-\B} \psi \rangle,
\end{equation}
using $H_1 \geq \N_+$. This shows on the one hand  that  $$\langle \psi, e^{\B} \N_+ e^{-\B} \psi \rangle \leq O(1),$$ 
and on the other hand we obtain 
$$ \left(\frac nN -1\right)^2 \lesssim \frac 1N,$$ which yields
$$ n = N + O(\sqrt{N}).$$ 
Hence, in order to deduce condensation it suffices to show that $\N_+$ is invariant under unitary transformation of  $ e^{\B}$, 
at least for $\N \leq 10N$. 
\begin{lemma}[\cite{BS}] \label{N_+selfbound}
There is a constant $C > 0$ such that for all $t \in [-1,1]$, as operator on $\chi_{\N \leq 10 N} \F$,
\begin{equation}\label{N+1}
e^{-tB}(\mathcal{N}_+ + 1) e^{tB} \leq C(\mathcal{N}_+ +1).
\end{equation}
More general one has for $n = \{1,2,3\}$ ,
\begin{equation}\label{N+2}
e^{-tB}(\mathcal{N}_+ +1)^n e^{tB} \leq C_n (\mathcal{N}_+ +1)^n.
\end{equation}
\end{lemma}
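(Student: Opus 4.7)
The plan is a Gr\"onwall argument. For $\psi \in \chi_{\N \leq 10 N}\F$ define
$f(t) := \langle \psi, e^{-t\B}(\N_+ + 1) e^{t\B}\psi\rangle$; since $[\B,\N]=0$, the vector $e^{t\B}\psi$ stays in the same truncated subspace. Differentiating gives
$f'(t) = \langle e^{t\B}\psi,\, [\N_+, \B]\, e^{t\B}\psi\rangle$, and because $B$ creates two non-zero momentum excitations while annihilating two $a_0$'s, one has $[\N_+, B] = 2B$, hence $[\N_+, \B] = 2(B + B^*)$. So the whole question reduces to the operator inequality
\begin{equation*}
\pm(B + B^*) \leq C(\N_+ + 1) \quad \text{on }\chi_{\N \leq 10 N}\F.
\end{equation*}

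To establish this I would decompose any $\chi$ in the truncated Fock space along the spectral projections of $\N_+$, writing $\chi = \sum_k \chi_k$ with $\N_+\chi_k = k\chi_k$. Since $B$ shifts $\N_+$ by $+2$, only the diagonal terms $\langle \chi_{k+2}, B\chi_k\rangle$ survive, and each is controlled by two Cauchy--Schwarz steps (on the inner product and on the sum in $p$):
\begin{equation*}
|\langle \chi_{k+2}, B\chi_k\rangle| \leq \frac{1}{2N}\|a_0 a_0 \chi_k\|\,\Bigl(\sum_p|\vphi_p|^2\Bigr)^{1/2}\Bigl(\sum_p\|a_{-p}a_p \chi_{k+2}\|^2\Bigr)^{1/2}.
\end{equation*}
Three inputs combine to produce the right power of $\N_+$. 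First, the scattering equation \eqref{scattequ} forces $\vphi_p = O(\hat V(p/N)/p^2)$, which is $\ell^2(2\pi\Z^3)$ uniformly in $N$. Second, the identity $\sum_p a_p^\dagger a_{-p}^\dagger a_{-p} a_p = \sum_p n_p n_{-p} \leq \N_+^2$ (valid for $p \neq 0$ via $[a_p,a_{-p}^\dagger]=0$) gives the bound $(k+2)\|\chi_{k+2}\|$ on the second factor. Third, the restriction $\N_0 \leq 10N$ bounds $\|a_0 a_0 \chi_k\|$ by $10N\|\chi_k\|$, whose $N$ cancels the prefactor $1/(2N)$. The result is $|\langle \chi_{k+2}, B\chi_k\rangle| \leq C(k+2)\|\chi_{k+2}\|\|\chi_k\|$, and AM--GM in $k$ followed by summation yields $|\langle \chi, B\chi\rangle| \leq C\langle \chi,(\N_+ + 1)\chi\rangle$; the bound for $B^*$ follows by hermitian conjugation. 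Gr\"onwall then gives $f(t) \leq e^{2C|t|}f(0)$, proving \eqref{N+1}.

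For \eqref{N+2} with $n \in \{2,3\}$ I would iterate the scheme using the commutator identity $[(\N_+ + 1)^n, B] = B\bigl[(\N_+ + 3)^n - (\N_+ + 1)^n\bigr]$, whose second factor is a polynomial $p_{n-1}(\N_+)$ of degree $n-1$, and analogously for $B^*$. On the $\N_+ = k$ eigenspace $p_{n-1}(\N_+)$ acts by the scalar $p_{n-1}(k) = O((k+1)^{n-1})$, so the same Cauchy--Schwarz/AM--GM estimate yields $|\langle \chi, B\,p_{n-1}(\N_+)\chi\rangle| \leq C_n\langle \chi,(\N_+ + 1)^n\chi\rangle$, and Gr\"onwall closes the argument.

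The main obstacle, and the conceptual heart of the estimate, is the cancellation of the factor $N$: the two $a_0$'s contribute $\|a_0 a_0\| \sim N$, which exactly consumes the $1/(2N)$ from the definition of $B$. The power of $\N_+$ needed on the right-hand side is then supplied \emph{entirely} by the two momentum annihilation operators $a_{-p}a_p$ via $\sum_p n_p n_{-p} \leq \N_+^2$. This delicate balance is why the truncation $\N \leq 10N$ is indispensable: without it, $\|a_0 a_0\|$ could not be dominated by $N$, the cancellation would fail, and the estimate would collapse.
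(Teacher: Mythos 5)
Your proposal is correct and follows the same basic architecture as the paper's proof: a Grönwall argument driven by a Cauchy--Schwarz estimate on the commutator $[\N_+,\B]$, with the key cancellation between $\|a_0 a_0\| \lesssim N$ (available only thanks to the truncation $\N \le 10N$) and the prefactor $1/(2N)$ in $\B$, and with $\|\vphi\|_2$, $\|\vphi\|_\infty$ finite. The only genuine difference is where you apply Cauchy--Schwarz: you decompose $\chi$ into $\N_+$-eigenspaces and bound the single surviving off-diagonal matrix element $\langle\chi_{k+2},B\chi_k\rangle$, borrowing the factor $(k+2)$ from $\sum_p n_p n_{-p}\le \N_+^2$, whereas the paper applies the operator-level inequality $XY + Y^*X^*\le XX^* + Y^*Y$ directly to the commutator and then pushes $\N_0\le 10N$ through the resulting positive operator. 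Both routes produce the same linear bound $|\langle\chi,(B+B^*)\chi\rangle|\lesssim \langle\chi,(\N_+ +1)\chi\rangle$, and your treatment of $n=2,3$ via the pull-through identity $(\N_+ +1)^n B = B(\N_+ +3)^n$ is just a repackaging of the commutator expansion the paper uses. Two trivial remarks: the reduction should include the observation that $[\N_+,\B] = 2(B+B^*)$ (which you do state), and the claim ``$\vphi_p = O(\hat V(p/N)/p^2)$'' should more precisely read $\vphi_p = O(\kappa/p^2)$ (the convolution term in \eqref{scattequ} does not vanish where $\hat V(p/N)$ does), but this does not affect the $\ell^2$ bound you actually use.
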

\begin{proof}
Denote 
$$ F(t) = e^{-t\B} (\N_+ + 1)e^{t\B}.$$
Therefore, via explicit calculation,
\begin{multline}
\frac d{dt} F(t) = e^{-t\B} [\N_+, \B] e^{t\B}= \frac 1{2N} e^{-t\B} \sum_{p,k} \vphi_k [ a^\dagger_p a_p, a^\dagger_k a^\dagger_{-k} a_0 a_0 - \rm{h.c.}]e^{t\B} \\
= \frac 1{2N} e^{-t\B} \sum_{k} \vphi_k (a^\dagger_k a_0 a^\dagger_{-k} a_0 + \rm{h.c.})e^{t\B}  \\
\leq \frac 1{N} e^{-t\B} \sum_{k} (a^\dagger_k a_0 a_k  a_0^\dagger +  \vphi_k^2 a_k a^\dagger_k a_0^\dagger a_0)e^{t\B} \\
= \frac 1{N} e^{-t\B} \sum_{k} (a^\dagger_k a_k  (\N_0 (1 + \vphi_k^2) + 1)  +  \vphi_k^2 \N_0) e^{t\B} 
\lesssim  F(t),
\end{multline}
where we used that $\N_0 \leq \N \leq 10 N$, and the fact that $\vphi_k$ has bounded infinity- and $2$-norm, see \eqref{infvphi}, \eqref{2vphi}
With Groenwall's Lemma we obtain \eqref{N+1}.

With respect to \eqref{N+2} lets look at the case $n=3$. 
Then 
\begin{multline}
\frac d {ds} e^{-s\B} ( \mathcal{N}_+ + 1)^3 e^{s\B} = e^{-s\B} [( \mathcal{N}_+ + 1)^3,B] e^{s\B}
\\ = e^{-s\B} \left( (2 \mathcal{N}_+ + 1)^2[\N_+,\B] +( \mathcal{N}_+ + 1) [\N_+,\B] ( \mathcal{N}_+ + 1)  + [\N_+,\B]  ( \mathcal{N}_+ + 1)^2 \right) e^{s\B}
\end{multline} 
Using further that $\N_+ a^\dagger_p = a_p^\dagger(\N_+ +1) $, which yields $\N_+ a^\dagger_p a^\dagger_{-p}= a_p^\dagger a^\dagger_{-p} (\N_+ +2)$,
together with Cauchy-Schwarz, one obtains 
$$\frac d {ds} e^{-s\B} ( \mathcal{N}_+ + 1)^3 e^{s\B} \lesssim e^{-s\B} ( \mathcal{N}_+ + 1)^3  e^{s\B} ,$$
which yields the result via Gronwall's Lemma. 
\end{proof}

By Lemma \ref{N_+selfbound} 
$$\langle \psi,(\N_+ +1) \psi \rangle = \langle e^{\B}\psi, e^{\B} (\N_+ +1) e^{-\B} e^{\B} \psi \rangle \leq C \langle \psi e^{-\B}  (\N_+ +1) e^{\B} \psi \rangle,$$
which finally allows us to conclude from \eqref{cond} 
that $$ \langle \psi, \N_+ \psi \rangle \leq O(1),$$
which implies complete BEC condensation.
Further 
$$ \langle \psi, \N_0 \psi \rangle = \langle \psi, (\N - \N_+)\psi \rangle = n + O(1),$$
or expressed differently 
$$ \langle \phi_0,\gamma_\psi \phi_0 \rangle = \langle \psi, a^\dagger_0 a_0 \psi \rangle  = n + O(1) .$$

\section{Proof of Lemma \ref{error}}

The proof of Lemma \ref{error} was carried out in detail in \cite{BBCS}. 
The estimates of some terms are tedious, however, straightforward. 
The goal of this section is to outline and streamline the strategy of \cite{BBCS}. 
Let us start with collecting some information about $\vphi_p$. 

\begin{lemma}\label{Lemmavphi} For small enough $\kappa$ one has
\begin{align}\label{infvphi}
& \|\vphi_p\|_\infty  \lesssim \kappa \\\label{2vphi}
& \|\vphi_p \|_2  \lesssim \kappa \\ \label{1vphi}
& \|\vphi_p \|_1  \lesssim \kappa N \\ \label{sumVvphi}
 & \frac 1{N} \sum_p |\hat V(p/N) \vphi_p|   \lesssim \kappa 
\end{align}
\end{lemma}
\begin{proof}
The estimates \eqref{2vphi}, \eqref{infvphi}, \eqref{sumVvphi} are a  consequence of the inequality 
\begin{equation}\label{suppp} 
\sup_{p \in \Lambda^*\setminus \{0\}}|p^2 \vphi_p| \lesssim \kappa.
\end{equation}
To see this, recalling \eqref{scattequ}, we estimate
$$ | \frac {\kappa}{N} \sum_q \hat V((p-q)/N) \vphi_q | \leq \frac {\kappa}{N} \sum_q \frac {|\hat V((p-q)/N)| }{q^2} |q^2 \vphi_q| \lesssim \kappa \sup_q |q^2 \vphi_q| ,$$
with $$ \frac {\kappa}{N} \sum_q \frac {|\hat V((p-q)/N)| }{q^2} = \frac {\kappa}{N^3} \sum_q \frac {|\hat V((p-q)/N)| }{(q/N)^2} \leq C \kappa, $$
which can be seen by treating the latter expression as the Riemann sum of the convolution $|\hat V| \ast 1/p^2$ which is a uniformly bounded function. 
Hence one can bound the absolute value of the left hand side of \eqref{scattequ} from below by 
$\sup_{p}|p^2 \vphi_p|(1- C \kappa)$ which implies \eqref{suppp} using the boundedness of $\hat V$. 
The inequality $|\vphi_p| \lesssim \kappa/p^2$ immediately implies \eqref{2vphi}, \eqref{infvphi}, \eqref{sumVvphi}.
In order to see \eqref{1vphi} we need more decay for large $p$. In configuration space this corresponds to more smoothness. This is usually implied by a bootstrap argument. This can be done here as well. 
Plugging $|\vphi_p| \lesssim \kappa/p^2$  into equation\eqref{scattequ} yields 
$$ |\vphi_p| \lesssim \frac{\kappa}{N^2 (p/N)^2} \left[ \hat V(p/N) + \frac 1 {N^3} \sum_q \hat V((p-q)/N) \frac \kappa{(q/N)^2} \right],$$
which implies 
\begin{equation}\label{boundvphi} |\vphi_p|  \lesssim \frac{\kappa}{N^2 (p/N)^2} G(p/N),
\end{equation}
with $G(p/N) $ at least bounded by
$$ G(p/N) \lesssim 1/(1 + (p/N)^2).$$
Continuing the bootstrap argument allows to improve the fall off properties of $\vphi_p$ further, which however, is not necessary for our purpose. 
The bound \eqref{boundvphi} implies \eqref{1vphi} by considering the sum as Riemann sum. 
\end{proof} 

Let us remark that we perform all our estimates on states $\xi \in \F$, with $$\N\xi = n \xi, \quad n \leq 10 N.$$
Equivalently we will frequently use the operator estimates 
$$ \N \lesssim N, \,\, \N_0 \lesssim N, \,\, \N_+ \lesssim N.$$ 

We start with looking at the terms $\Xi + \E$. The strategy is rather straightforward. 
Whenever the terms inside the bracket of $e^{-s\B}(...)e^{s\B}$ can be estimated by $C (\N_+ + 1)^m$,
then Lemma \ref{N_+selfbound} can be used to bound the total expressions by $ \lesssim \kappa (\N_+ +1)$.

Let us demonstrate this in the case of the first two terms of $\Xi$ in \eqref{CQ2B} as well as $e^{-\B} H_2 e^{\B}$.
For convenience denote 
\begin{multline}
\Xi_1 = -\frac{\kappa} {N^2} \sum_p  \hat{V}(p/N) \vphi_p   \int_0^1 \int_s^1 e^{-t\B} \N_+(2 \N - \N_+ -1 )e^{t\B}\dt \ds \\ +  
 \frac{2\kappa} {N^2}  \sum_p \hat{V}(p/N) \vphi_p \int_0^1 \int_s^1 e^{-t\B}\N_0(\N_0 -1) a^\dagger_p a_pe^{t\B} \dt \ds. 
\end{multline}
Using $\N_+ \leq \N \lesssim N$ and  Lemma \ref{N_+selfbound} we derive
$$ \Xi_1 \lesssim \kappa (\N_+ + 1) \frac 1N \sum_p |\hat V(p/N) \vphi_p | \lesssim \kappa (\N_+ + 1),$$
where in the last step we used \eqref{sumVvphi}. The estimate for $e^{-\B} H_2 e^{\B}$ works in an analogous way. 
Hence we obtain
\begin{equation}
\langle \psi, (\Xi_1 + e^{-\B} H_2 e^{\B}) \psi \rangle \lesssim  \kappa \langle \psi, (\N_+ +1 ) \psi \rangle.
\end{equation}

Next, let us look at the term $\Gamma$. This cannot simply be estimated by $\N_+$. 
We additionally need the interaction $Q_4$. This is no problem as long as the term is not in between
$e^{-s\B} ... e^{s\B}$, due to the fact that $e^{-s\B} Q_4 e^{s\B}$ cannot be dominated by $Q_4$ and $\N_+$, since $e^{-s\B} Q_4 e^{s\B}$ produces an terms of order $N$. %However, in order to recover the potential energy $Q_4$ we have to work in configuration space. 

\begin{remark}
In the following it is important to absorb some error terms by the interaction term $Q_4$. To this aim let us recall first that  
the bosonic Fock space can written as 
$$  \mathcal{F}(\mathcal{H}) =  \mathcal{F}(\{\phi_0\} \oplus \mathcal{H}_\perp) =  \mathcal{F}_0 \otimes  \mathcal{F}_\perp$$
where  $  \mathcal{F}_0$ is the Fock space spanned by the one dimensional space $\{\phi_0\}$ and $\mathcal{F}_\perp = \mathcal{F}(\mathcal{H}_\perp)$, the Fock space built by all states orthogonal to $\phi_0$. In order to see the positivity of $Q_4$ one has to rewrite the term in configuration space. With $$ \check a_x = \sum_p a_p e^{i p \cdot x} , \quad a_p = \int_{\Lambda} \check a_x e^{-ix\cdot p},$$ one can check that  for states $\eta \in \F_\perp$ 
$$ \langle \eta, Q_4 \eta \rangle = \langle \eta,  \iint dx dy   \kappa V_{N}(x-y) \ca^\dagger_x \ca^\dagger_y \ca_y\ca_x 
 \eta \rangle = \iint dx dy   \kappa V_{N}(x-y) \|\ca_y\ca_x \eta  \|^2, $$
where the last term is fundamentally positive for any positive interaction. For that reason it turns out to be convenient to estimate some of the error terms in configuration space.
Hence, whenever we use the interaction $ \iint dx dy   \kappa V_{N}(x-y) \ca^\dagger_x \ca^\dagger_y \ca_y\ca_x $ it has to be remembered that it {\em only} acts on
$\F_\perp$. For sake of convenience we omit the corresponding symbols indicating the restrictions on $\F_\perp$. 
\end{remark}

Lets come back to the term $\Gamma$. 
 Denoting $\Lambda = [-1/2,1/2]^3$ we calculate 
\begin{align*}
&  \sum  \hat{V}_{N}(r) \vphi_p a_{p+r}^\dagger a_q^\dagger a_{-p}^\dagger a_{q+r}\,  a_0 a_0 
\\
&= \sum_{r,p,q} \hat{V}_{N}(r) \vphi_p \int \ca_x^\dagger e^{ix(p+r)} dx \int_\Lambda \ca_y^\dagger e^{iyq} dy \int_\Lambda \ca_z^\dagger e^{-izp} dz \int_\Lambda \ca_w e^{-iw(q+r)} { d}w a_0 a_0 
\\
&= \sum_{p,r} \iiint_{\Lambda^3} dx dy dz \hat{V}_{N}(r) \vphi_p \ca_x^\dagger \ca_y^\dagger \ca_z^\dagger \ca_y e^{ip(x-z)} e^{ir(x-y)} a_0 a_0 
\\
&= \iint_{\Lambda^2} dx dy  V_N(x-y) \ca_x^\dagger \ca_y^\dagger \int \ca_z^\dagger 
\underbrace{\vphi(x-z)}_{=: \vphi_x(z)} 
 dz \ca_y a_0 a_0 
\\
&=\iint_{\Lambda^2} dx dy V_{N}(x-y) \ca_x^\dagger \ca_y^\dagger  \ca^\dagger(\vphi_x) \ca_y a_0 a_0 ,
\end{align*}
where we used from the second to the third line that $\sum_q e^{\ii q \cdot(y-w) }= \delta(y-w)$. 
In terms of expectation values we thus obtain for $\Gamma$ 
\begin{multline}
\frac{\kappa}{N}  |\braket{\xi, \sum   \hat{V}_{N}(r) \vphi(p) a_{p+r}^\dagger a_q^\dagger a_{-p}^\dagger a_{q+r} a_0 a_0 \xi }|
\\
\leq  \frac{1}{N} \iint dx dy \kappa V_{N}(x-y) \|\ca_y \ca_x \xi \| \|\ca^\dagger(\vphi_x)\ca_y a_0 a_0 \xi\|
\\
\leq   \frac{1}{N} \left( \iint dx dy   \kappa V_{N}(x-y) \|\ca_y\ca_x \xi \|^2 \right)^\frac{1}{2} \left(\iint dx dy \kappa V_{N}(x-y) \|\ca^\dagger(\vphi_x)\ca_y a_0 a_0 \xi \|^2 \right)^\frac{1}{2}
\\
\leq \sqrt{2} \braket{\xi,Q_4 \xi}^\frac{1}{2} \|\vphi\|_2 \kappa^{1/2} \|V_N\|^{1/2}_1 \left \|\frac {\N_0}N (\mathcal{N}_+ + 1)^\frac{1}{2}\N_+^{1/2}\xi \right\|
%
%\leq \sqrt{2} \braket{\xi,Q_4 \xi}^\frac{1}{2} \|\vphi\|_2 \left(\iint dx dy \kappa  V_{N}(x-y) \left \|\frac {\N_0}N (\mathcal{N}_+ + 1)^\frac{1}{2}\ca_y\xi \right\|^2 \right)^\frac{1}{2}
%\\
%&= \sqrt{2} \braket{\psi,Q_4 \psi}^\frac{1}{2} \|\vphi\|_2 \|V_{N}\|_1^\frac{1}{2} \|\mathcal{N}_+\psi\|
%\\
%&\leq \delta \braket{\psi,Q_4\psi} + \frac{2}{4\delta} \|\vphi\|_2^2 \|V_{N}\|_1 \|\mathcal{N}_+\psi\|^2
%\\
%\leq \delta\braket{\xi, Q_4\xi} + \kappa^3 C_\delta \braket{\xi, (\mathcal{N}_+ +1)  \xi}.
\end{multline}
By means of Cauchy-Schwarz, and the fact that $\|V_N\|_1 = \|V\|_1/N$, we conclude
that  for any $\delta$ there is a $C_\delta$ such that 
\begin{equation}
\langle \xi, \Gamma \xi \rangle \leq \delta \langle \xi,  Q_4 \xi \rangle + C_\delta \kappa^3 \langle \xi, (\N_+ +1 ) \xi \rangle.
\end{equation}

Next, consider the term 
$$ \int_0^1 \int_0^s e^{-t\B} [\Gamma,B] e^{t\B} \dt \ds.$$
For convenience, we neglect the terms $a_0 a_0/N$, which are bounded by a constant anyway at the end. 
To this aim we first calculate the commutator $[\Gamma,\B]$
\begin{align*}
[\Gamma,\B]=& \frac \kappa{N^2}  \left[\sum \hat{V}_{N}(r)\vphi_p a_{p+r}^\dagger a_q^\dagger a_{-p}^\dagger a_{q+r}a_0a_0 + h.c., \frac{1}{2}\sum \vphi_l (a_l^\dagger a_{-l}^\dagger a_0a_0 - h.c.) \right] 
\\
= &\frac{\kappa}{2N^2 } \sum_{p,q,r,l} \hat{V}_{N}(r)\vphi_p  \vphi_l \left[a_{p+r}^\dagger a_q^\dagger a_{-p}^\dagger a_{q+r} a_0a_0 , a_l^\dagger a_{-l}^\dagger a_0a_0  - a_{-l}a_l a_0^\dagger a_0^\dagger \right] + h.c.
\end{align*}
Evaluating these commutators leads to three types of terms. First 
\begin{equation}\label{Gamma1term}
\frac\kappa{N^2}  \sum_{p,q,r}  \hat{V}_{N}(r) \vphi_p a_{p+r}^\dagger a_q^\dagger a_{-p}^\dagger a_{-q-r}^\dagger \vphi_{q+r} a_0a_0a_0a_0 + h.c.
\end{equation}
second, $-\N_0(\N_0 -1)/N^2$ times the expression
\begin{multline}\label{Gamma2term}
 \kappa \sum_{p,q,r}  \hat{V}_{N}(r) \vphi_p \left( a_{p+r}^\dagger a_q^\dagger a_p a_{q+r} \vphi_p + a_{p+r}^\dagger a_{-p}^\dagger a_{-q} a_{q+r} \vphi_q 
 + a_q^\dagger a_{-p}^\dagger a_{-p-r}a_{q+r}\vphi_{p+r} \right) 
\\
+ \kappa \sum \hat{V}_{N}(r) \vphi_p^2 a_{p+r}^\dagger a_{p+r} +  \kappa \sum \hat{V}_{N}(0)\vphi_p^2 a_q^\dagger a_q + \kappa \sum \hat{V}_{N}(r)\vphi_p \vphi_{p+r}a_p^\dagger a_p.
\end{multline}
The third term stems from the commutator $[a_0a_0,a^\dagger_0a^\dagger_0] = 2(2\N_0 + 1)$, i.e., 
%\begin{align*}
%&= \kappa \sum_{p,q,r,l} \hat{V}_{N}(r)\vphi_p \vphi_l \left(a_{p+r}^\dagger a_q^\dagger a_{-p}^\dagger a_l^\dagger \delta_{l,-q-r} + a_{p+r}^\dagger a_q^\dagger a_l a_{q+r} \right. \delta_{l,p} 
%\\
%& \qquad\left. + a_{p+r}^\dagger a_l a_{-p}^\dagger a_{q+r} \delta_{l,-q} + a_l a_q^\dagger a_{-p}^\dagger a_{q+r} \delta_{l,-p-r} \right) + h.c.
%\\
%& = \kappa \sum_{p,q,r}  \hat{V}_{N}(r) \vphi_p \left(a_{p+r}^\dagger a_q^\dagger a_{-p}^\dagger a_{-q-r}^\dagger \vphi_{q+r} + a_{p+r}^\dagger a_q^\dagger a_p a_{q+r} \vphi_p + a_{p+r}^\dagger a_{-p}^\dagger a_{-q} a_{q+r} \vphi_q \right.
%\\
%&\qquad \left. + a_q^\dagger a_{-p}^\dagger a_{-p-r}a_{q+r}\vphi_{p+r} \right) + h.c.
%\\
%&+ 2\kappa \sum \hat{V}_{N}(r) \vphi_p^2 a_{p+r}^\dagger a_{p+r} + 2 \kappa \sum \hat{V}_{N}(0)\vphi_p^2 a_q^\dagger a_q + 2\kappa \sum \hat{V}_{N}(r)\vphi_p \vphi_{p+r}a_p^\dagger a_p.
%\end{align*}
\begin{equation}\label{Gamma3term}
\frac {\kappa (2\N_0 + 1)} {N^2}  \sum_{p,q,r,k} \hat{V}_{N}(r)\vphi_p \vphi_k a_{p+r}^\dagger a_q^\dagger a_{-p}^\dagger a_{q+r} a_{-k} a_{k}  + \rm{h.c.}
\end{equation}
Recall that all terms have to be sandwiched between $e^{-t\B}... e^{t\B}$. This complicates the estimates whenever it is not possible to bound the terms solely by the 
number operator $\N_+$, but instead  we are forced to use the potential $Q_4$. 
For that reason we postpone the estimation of the terms \eqref{Gamma1term} and \eqref{Gamma3term}.
For the moment we only concentrate on \eqref{Gamma2term}.
Let us start with the quadratic expressions in the last line in \eqref{Gamma2term}. 
Since $\frac {(\N_0 + 1)} {N}  \lesssim 1$, 
the corresponding first two terms in the second line of \eqref{Gamma2term} are simply bounded by 
$$ \frac \kappa{N}  \sum_{r,p} \hat{V}((r-p)/N) \vphi_p^2 a_{r}^\dagger a_{r} +  \frac \kappa{N}  \sum_{p,q} \hat{V}(0)\vphi_p^2 a_q^\dagger a_q \leq C\kappa^3 \frac{\N_+}N,$$
using the $L^2$-bound of $\vphi_p$.
For the third quadratic term we use that $$ \frac \kappa N |\sup_{p} \sum \hat{V}(r/N)\vphi_p \vphi_{p+r} | \leq \frac \kappa N  \| \hat{V} \|_\infty \|\vphi_p\|_1 \|\vphi_p\|_\infty \leq C \kappa^3,$$
with Lemma \ref{Lemmavphi},
such that 
$$ \kappa \sum_{r,p} \hat{V}_{N}(r)\vphi_p \vphi_{p+r}a_p^\dagger a_p \lesssim  \kappa^3 \N_+.$$ 
The analogue estimates hold after performing the integrals $ \int_0^1 \int_0^s e^{-t\B} ...  e^{t\B} \dt \ds$  using Lemma \ref{N_+selfbound}.

Concerning the quartic terms in \eqref{Gamma2term}, the  term where both functions $\vphi_{\ast}$ have the same index cannot be estimated solely by $\N_+$ either, but also needs the interaction $Q_4$. 
The other terms, however,  can   simply be bounded by $\N_+$. Thanks to Lemma \ref{N_+selfbound}  the application of $e^{-\B} ... e^{\B}$ lets the bounds unchanged. 
Let us demonstrate such an estimate on the term including $\vphi_p\vphi_q$. Using Cauchy-Schwarz in $p,q,r$ we obtain
\begin{multline}
 \frac \kappa N \sum_{p,q,r}  \hat{V}(r/N) \vphi_p \vphi_{q} \langle \xi, a_{p+r}^\dagger a_{-p}^\dagger a_{-q} a_{q+r} \xi \rangle \\ \lesssim
 \frac \kappa N \sum_{p,q,r}  |\vphi_q| \| a_{-p} a_{p+r} \xi \| \| a_{-q} a_{q+r} \xi\| |\vphi_p| \lesssim \kappa^3 \frac{\|\N_+ \xi\|^2}{N} \lesssim \kappa^3 \|\N_+^{1/2} \xi \|^2. 
 \end{multline}

\subsection{Remaining terms} 

Finally we collect the remaining terms 
\begin{multline} \label{remterms}
e^{-\B} Q_3 e^{\B} -  \frac {\kappa } {N^2} \sum_{r,p,q}  \hat{V}(r/N)\vphi_p^2   \int_0^1 \int_0^s e^{-t\B} a_{p+r}^\dagger a_q^\dagger a_p a_{q+r} \N_0 (\N_0 +1)e^{t\B} \dt \ds  \\ - \frac{\kappa}{N^3} \sum_{p,q,r}  \hat{V}(r/N) \vphi_{q+r} \vphi_p  \int_0^1 \int_s^1 e^{-t\B}  a_{p+r}^\dagger a_q^\dagger a_{-p}^\dagger a_{-q-r}^\dagger a_0 a_0a_0a_0 e^{t\B} \dt \ds+ h.c. \\ -   \frac{\kappa}{2N^2} \sum_{p,q}  \hat{V}(p/N) \vphi_q \int_0^1 \int_s^1 e^{-t\B}  a^\dagger_{p} a^\dagger_{-p} a_{-q} a_{q} (1 + 2 \N_0)  e^{t\B} \dt \ds\\ + \frac {\kappa} {N^2}  \sum_{p,q,r,k} \hat{V}_{N}(r)\vphi_p \vphi_k \int_0^1 \int_s^1 e^{-t\B} a_{p+r}^\dagger a_q^\dagger a_{-p}^\dagger a_{q+r} a_{-k} a_{k}  (\N_0 + 1) e^{t\B} \dt \ds,
\end{multline}
whose estimates are more elaborate. 
Brennecke and Schlein realized in \cite{BS} that these terms can be expressed via a convergent geometric sum where the bounds of each terms can be classified in a straight forward way. Thereby the convergence is guaranteed by the smallness of $\kappa$.
%The main  difficulty lies in the fact that $e^{-\B} Q_4 e^{\B}$ cannot be controlled simply by $Q_4$ and $\N_+$. 
In the following we present an alternative way of estimating these terms.
The method is essentially the same for all  terms in \eqref{remterms}. We will demonstrate the method on the first and the last two terms. The others work analogously.  
We start with the term $e^{-\B} Q_3 e^{\B}$. To this aim we rewrite it as 
\begin{equation}\label{43}
e^{-\B} Q_3 e^{\B} = \frac {\kappa}{N} \sum_{q,r,q+r} \hat{V}(r/N)\left[e^{-\B}a_{q+r}^\dagger a_{-r}^\dagger e^{\B} e^{-\B}  a_{q}  a_0 e^{\B}  + h.c.  \right].
\end{equation}
Via Duhamel's formula we have 
\begin{equation}\label{Duham}
 e^{-\B}a_{q+r}^\dagger a_{-r}^\dagger e^{\B} = a^\dagger_{q+r} a^\dagger_{-r} + \int_0^1 e^{-s\B} [a^\dagger_{q+r} a^\dagger_{-r}, \B] e^{s\B} ds .
 \end{equation} 
The idea behind this is the simple fact that the corresponding term in \eqref{43} involving $a^\dagger_{q+r} a^\dagger_{-r}$ will be estimated by $Q_4$.
The remaining terms, however, coming from $[a^\dagger_{q+r} a^\dagger_{-r}, \B] $ can be bounded by $\N_+$, which is stable under application of $e^{-\B} .. e^{\B}$. 
% 
%
%The point is now that the commutators with $\B$ produce at least on function $\vphi_{p+q}$ or $\vphi_{-p}$, such that Cauchy-Schwarz can be applied and,
%as shown in \cite{BBCS}, bounded by $\lesssim \kappa (\N_+ +1),$ under assumption $\N \lesssim N$. 
In order to recover $Q_4$ the term 
$$ \frac {\kappa}{N} \sum_{q,r} \hat{V}(r/N) a_{q+r}^\dagger a_{-r}^\dagger e^{-\B}  a_{q}  a_0 e^{\B}$$
has to be estimated in configuration space, where the term reads 
$$
\kappa \int_{\Lambda^2} dx dy  V_N(x-y) \ca^\dagger_x \ca_y^{\dagger} e^{-\B}  \ca_{x}  a_0 e^{\B},
$$
whose expectation value of $\xi$ is bounded by
\begin{multline}
\left( \int_{\Lambda^2} dxdy \kappa V_N(x-y) \| \ca_y \ca_x \xi\|^2\right)^{1/2} \left( \int_{\Lambda^2} dxdy \kappa V_N(x-y) \| e^{-\B} \ca_x a_0 e^{\B} \xi \|^2\right)^{1/2} \\ \lesssim \delta \langle \xi, Q_4 \xi \rangle + \kappa \langle \xi, (\N_+ +1) \xi \rangle.
\end{multline}

The remaining term has the form
$$  \frac {\kappa}{N} \sum_{q,r} \hat{V}(r/N)\int_0^1e^{-s\B}[a_{q+r}^\dagger a_{-r}^\dagger,\B] e^{s\B} e^{-\B}  a_{q}  a_0 e^{\B} ds . $$
Since
$$ [a_{q+r}^\dagger a_{-r}^\dagger,\B] = -\frac 2 N  \left( \vphi_r a^\dagger_{q+r} a_r + \vphi_{r +q} a^\dagger_{-r} a_{-q-r} + \vphi_r \delta_{q,0}\right)a^\dagger_0a^\dagger_0,$$
and the fact that the sum $ \sum_{q,r} $, by assumption, does only include indices different from $0$, only the first two terms need to be estimated. Since they are similar we only consider 
\begin{equation}\label{re1}
  \frac {\kappa}{N^2} \sum_{q,r} \hat{V}(r/N) \vphi_r \int_0^1e^{-s\B}a_{q+r}^\dagger a_{r} a^\dagger_0 a^\dagger_0 e^{s\B} e^{-\B}  a_{q}  a_0 e^{\B} ds . 
  \end{equation}
Using Cauchy-Schwarz for the expectation value of $\xi$ we deduce
$$
|\eqref{re1}| \lesssim \frac{\kappa}{N^2} \sum_{r,q} |\vphi_r| \int_0^1 ds \|a_{q+r} a_0 a_0 e^{s\B} \xi\| \| a_r e^{(s-1) \B} a_q a_0 e^{\B} \xi \| \lesssim \kappa^2 \| (\N_+ +1) \xi \|^2,
$$
where we used $|a_0 a_0| \lesssim \N \lesssim N$, Lemma \ref{N_+selfbound}, and 
\begin{multline}
\sum_{r,q}  \| a_r e^{(s-1) \B} a_q a_0 e^{\B} \xi \|^2 \leq \sum_q \| \N^{1/2} e^{(s-1) \B} a_q a_0 e^{\B} \xi \|^2 = \sum_q \|e^{(s-1) \B}  \N^{1/2} a_q a_0 e^{\B} \xi \|^2 \\ \leq 
\sum_q \| \N^{1/2} \N_0^{1/2} a_q  e^{\B} \xi \|^2 \lesssim N^2 \|(\N_+ + 1)^{1/2}  \xi \|^2.
\end{multline}
Next we look at the second to last term in \eqref{remterms}. 
%, i.e., the term
%$$  \frac{\kappa}{2N^2} \sum_{p,q}  \hat{V}(p/N) \vphi_q \int_0^1 \int_s^1 e^{-t\B}  a^\dagger_{p} a^\dagger_{-p} a_{-q} a_{q} (2 + 2 \N_0)  e^{t\B} \dt \ds.$$  
To this aim, notice 
 that the operator norm of 
$\Phi = \sum_p \vphi_p a_{-p} a_p$ can be estimated by 
\begin{equation}\label{normPhi}
| \Phi| \lesssim \kappa (\N_+ + 1),
\end{equation}
which can be seen by applying Cauchy-Schwarz
$$ |\sum_p \vphi_p a_{-p} a_p| \leq \sum_p ( \kappa a^\dagger_p a_p + (|\vphi_p|^2/\kappa) a_{-p} a_{-p}^\dagger ) \lesssim \kappa (\N_+ +1) .$$

Further we write 
\begin{multline}
\frac{\kappa}{2N^2} \sum_{p,q}  \hat{V}(p/N) \vphi_q \int_0^1 \int_s^1 e^{-t\B}  a^\dagger_{p} a^\dagger_{-p} a_{-q} a_{q} (2 + 2 \N_0)  e^{t\B} \dt \ds \\ =
\frac{\kappa}{2N} \sum_{p}  \hat{V}(p/N) \int_0^1 \int_s^1 e^{-t\B}  a^\dagger_{p} a^\dagger_{-p} e^{t\B} e^{-t\B} \Phi (2 + 2 \N_0)/N e^{t\B} \dt \ds
\end{multline}
where
\begin{equation}\label{adad} e^{-t\B}  a^\dagger_{p} a^\dagger_{-p} e^{t\B}  =  a^\dagger_{p} a^\dagger_{-p} + \int_0^t e^{-s\B} [a^\dagger_{p} a^\dagger_{-p} ,\B] e^{s\B} ds ,
\end{equation}  
and
$$ - [a^\dagger_{p} a^\dagger_{-p} ,\B] =  \frac{a^\dagger_0 a^\dagger_0}{N} \vphi_p( a^\dagger_p a_p + a^\dagger_{-p} a_{-p} + 1 ) .$$
Hence, the expression coming from the second term on the right hand side of \eqref{adad} gives 
\begin{multline}
\frac{\kappa}{N} \sum_{p}  \hat{V}(p/N) \vphi_p \int_0^1 \int_s^1\int_0^t e^{-\tau \B}   \frac{a^\dagger_0 a^\dagger_0}{N} ( a^\dagger_p a_p + 1 )e^{\tau \B}   e^{-t\B} \Phi (1 + 2 \N_0)/N e^{t\B} \dt \ds \lesssim \kappa^3(\N_+ + 1) .
\end{multline} 
The first term on the right hand side of \eqref{adad}, including $ a^\dagger_{p} a^\dagger_{-p}$, is again evaluated by rewriting it in configuration space 
$$ \int_0^1 \int_s^1 \int_{\Lambda^2} dxdy \kappa V_N(x-y) \ca_x^\dagger \ca_y^\dagger e^{-t\B} \Phi (2 + 2 \N_0)/N e^{t\B} \dt \ds,$$
which can be bounded by $$ \delta Q_4 + \kappa^2 (\N_+ +1).$$

Finally, consider the last term in \eqref{remterms}, which we conveniently rewrite as 
$$ \frac {\kappa} {N^2}  \sum_{p,q,r} \hat{V}_{N}(r)\vphi_p \int_0^1 \int_s^1 e^{-t\B} a_{p+r}^\dagger a_q^\dagger e^{t\B}e^{-t\B} a_{-p}^\dagger a_{q+r} \Phi (\N_0 + 1) e^{t\B} \dt \ds,$$ 
using again the notation $\Phi = \sum_k \vphi_k a_{-k} a_k$. Next we apply Duhamel again, similar to \eqref{Duham}, to  $e^{-t\B} a_{p+r}^\dagger a_q^\dagger e^{t\B}$
and obtain two terms, where the first one including $a_{p+r}^\dagger a_q^\dagger$ has to be estimated by $Q_4$. More precisely,
in configuration space that term has the form 
$$  \int_0^1 \int_s^1 \int_{\Lambda^2} dxdy \kappa V_N(x-y)   \ca_{x}^\dagger \ca_y^\dagger e^{-t\B} \frac{\ca^\dagger(\vphi_x) \ca_{y} \Phi (\N_0 + 1)}{N^2} e^{t\B} \dt \ds \lesssim  \delta Q_4 + \kappa^2 (\N_+ +1).$$ 
The second term involving $$\int_0^t e^{-s\B} [a^\dagger_{p+r} a^\dagger_{q}, \B] e^{s\B} ds,$$ 
i.e., 
$$ \frac {\kappa} {N^2}  \sum_{p,q,r} \hat{V}_{N}(r)\vphi_p \int_0^1 \int_s^1 \int_0^t e^{-\tau\B} [a^\dagger_{p+r} a^\dagger_{q}, \B] e^{\tau \B}   e^{-t\B} a_{-p}^\dagger a_{q+r} \Phi (\N_0 + 1) e^{t\B} d\tau  \dt \ds,$$

is again estimated by $\kappa^3(\N_+ + 1)$.

\section{Proof of Lemma \ref{scat}}

Using \eqref{invfour} we can write 
\begin{multline}
 \frac {1}{N} \sum_{p} \hat V(p/N) \vphi_p = N^2 \frac 1{N^3} \sum_{p\neq 0} \hat V(p/N) \vphi_p= N^2 \langle P_0^\perp \vphi, P_0^{\perp} V  \rangle_N 
 \\ = - \left \langle   V, P_0^\perp
\frac 1{P_0^\perp(-\Delta + \frac \kappa 2 V )P_0^\perp} P_0^\perp \frac \kappa 2 V \right \rangle_N, \qquad \qquad 
\end{multline} 
where we used \eqref{deffi} to obtain the last equality. 

This implies now for $a_N$
\begin{multline}
4\pi a_N = \frac{\kappa \hat{V}(0) }{2} + \frac{1 }{N} \sum_p \frac \kappa 2\hat{V}(p/N) \vphi_p  \\ = 
\frac \kappa 2 \int_{[-\frac N2 ,\frac N2]^3} V(x) -  \left \langle \frac \kappa 2 V, P_0^\perp
\frac 1{P_0^\perp(-\Delta + \frac \kappa 2 V )P_0^\perp} P_0^\perp \frac \kappa 2 V \right \rangle_N.
% \\
%= \frac{\N_0^2}N \left \langle \sqrt{v} , \frac 1 {1 + \sqrt{v} \frac 1{\epsilon_\mu} \sqrt{v}} \sqrt{v} \right \rangle,
\end{multline}
On a formal level this converges for $N\to \infty$ as
\begin{multline}
\frac \kappa 2 \int_{[-\frac N2 ,\frac N2]^3} V(x) -  \left \langle \frac \kappa 2 V, P_0^\perp
\frac 1{P_0^\perp(-\Delta + \frac \kappa 2 V )P_0^\perp} P_0^\perp \frac \kappa 2 V \right \rangle_N  \to_{N\to\infty}  \\
\frac \kappa 2 \int_{\R^3}V(x) -  \left \langle \frac \kappa 2 V, 
\frac 1{(-\Delta + \frac \kappa 2 V )}  \frac \kappa 2 V \right \rangle =  \left \langle \sqrt{v} , \frac 1 {1 + \sqrt{v} \frac 1{-\Delta} \sqrt{v}} \sqrt{v} \right \rangle,
\end{multline}
with $v = \frac \kappa 2 V$. The right hand side is $4\pi a$, with $a$ being the scattering length. 
In order to obtain the bound $|a - a_N | \leq O(1/N)$, 
lets denote $$ \1_N = \sum_{p\neq 0} | e_p\rangle \langle e_p |, \qquad | e_p\rangle = e^{-i \frac pN \cdot x}/N^{3/2},$$
such that $\1_N +|e_0\rangle \langle e_0 | $ is the identity on $L^2([-N/2,N/2]^3)$. Then we can write 
$$ \left \langle v , P_0^\perp
\frac 1{P_0^\perp(-\Delta + v)P_0^\perp} P_0^\perp v\right \rangle_N = \left \langle v, \1_N \frac 1{-\Delta + v} \1_N v \right \rangle = \left \langle v, (\1 - \delta_N) \frac 1{-\Delta + v} (\1-\delta_N) v \right \rangle ,$$ 
with $\delta_N = \1 - \1_N$, where $\1$ denotes the identity on $L^2(\R^3)$. Notice, we implicitly assume that the application of $\1_N$ means that one only integrates over $[-N/2,N/2]^3$. We also assume that $N$ is large enough such that the support of $v$ is in $[-N/2,N/2]^3$. 
Hence, we can write the difference of $a - a_N$ as
\begin{multline}\label{diffaaN}
 a- a_N = \left \langle v ,\1_N \frac 1{-\Delta + v} \1_N v \right \rangle - \left \langle v, \frac 1{-\Delta + v} v \right \rangle \\ = - 2 \Re\left  \langle v ,\delta_N \frac 1{-\Delta + v}  v \right \rangle +  \left \langle v ,\delta_N \frac 1{-\Delta + v} \delta_N v\right \rangle. 
 \end{multline}  
Observe that $\vphi = - \frac 1{-\Delta + v} v $ is the solution of the scattering equation on the whole space, which is smooth function with falloff $1/|x|$ in configuration space, due to the properties of $V$. 
The first term of the right hand side of \eqref{diffaaN} is  
$$\left \langle v ,\delta_N \frac 1{-\Delta + v}  v \right \rangle = \int_{\R^3} \hat V(p) \hat \vphi(p) dp -  \frac 1{N^3} \sum_p \hat \vphi(p/N) \hat v(p/N),$$ 
which is the difference of the Riemann sum and its integral. A second order Taylor expanding of $\hat \vphi(p)$ shows that this error is of order $1/N$ 
due to the $1/p^2$ behavior for small $p$. Notice that a quick first order expansion gives an error of $\log N /N$. 

For the second term in \eqref{diffaaN}, observe 
\begin{multline} 
\left \langle v, \delta_N \frac 1{-\Delta + v} \delta_N v \right \rangle \leq \left \langle v, \delta_N \frac 1{-\Delta} \delta_N v \right \rangle = \\ = \int_{\R^3} \frac{|\hat v(p)|^2}{p^2} dp - \frac 1{N^3} \sum_p  \frac{|\hat v(p/N)|^2}{(p/N)^2} = O(1/N),
\end{multline} 
which is again the difference of a specific integral and its Riemann approximation.

\subsection*{Acknowledgements}

I am deeply indebted to Benjamin Schlein for patiently explaining strategies and estimates of their work (BBCS) and several bounds used in the present paper. 
I am further thankful to Jan Philip Solovej for explanation of basic ideas in \cite{BSo}. Further I want to thank Phan Thanh Nam for many helpful discussions, as well as Peter M\"uller for valuable explanations. 
\\

\medskip 
The data that supports the findings of this study are available within the article


\begin{thebibliography}{55}



\bibitem{ABS}
A.~Adhikari, C.~Brennecke, B.~Schlein. 
Bose-Einstein Condensation Beyond the Gross-Pitaevskii Regime.
{\it Annales Henri Poincar\'e } {\bf 22}, (2021) 1163--1233

%\bibitem{BDS}
%N.~{Benedikter}, G.~{de Oliveira} and B.~{Schlein}. {Quantitative derivation of the Gross-Pitaevskii equation}. \emph{Comm. Pure Appl. Math.} (2014).

%\bibitem{BFKT}
%T. Balaban, J. Feldman, H. Kn\"orrer, E. Trubowitz. Complex Bosonic Many-Body Models:
%Overview of the Small Field Parabolic Flow. {\it Ann. Henri Poincar\'e } {\bf 18} (2017), 2873--2903.

\bibitem{BBCS0}
C. Boccato, C. Brennecke, S. Cenatiempo, B. Schlein. Complete Bose-Einstein condensation in the Gross-Pitaevskii regime. {\it Comm. Math. Phys.} {\bf 359} (2018), no. 3, 975--1026.

%\bibitem{BBCS1}
%C. Boccato, C. Brennecke, S. Cenatiempo, B. Schlein. The excitation spectrum of Bose gases interacting through singular potentials.  Preprint arXiv:1704.04819. To appear on {\it J. Eur. Math. Soc.} 

\bibitem{BBCS}
C. Boccato, C. Brennecke, S. Cenatiempo, B. Schlein. Bogoliubov Theory in the Gross-Pitaevskii limit. \emph{Acta Mathematica} \textbf{222} (2019), no. 2, 219--335.

\bibitem{BBCS1}
C. Boccato, C. Brennecke, S. Cenatiempo, B. Schlein. Optimal Rate for Bose-Einstein Condensation in the Gross-Pitaevskii Regime. \emph{Comm. Math. Phys.} (2019), doi:10.1007/s00220-019-03555-9. Preprint arXiv:1812.03086.

\bibitem{bog}
N. N. Bogoliubov. On the theory of superfluidity.
{\it Izv. Akad. Nauk. USSR} {\bf 11} (1947), 77. Engl. Transl. {\it J. Phys. (USSR)} {\bf 11} (1947), 23. 

%\bibitem{BCS}
%C. Brennecke, M. Caporaletti, B. Schlein. Excitation spectrum for Bose gases beyond the Gross-Pitaevskii regime. In preparation. 
%
\bibitem{BS}
C. Brennecke, B. Schlein. Gross-Pitaevskii dynamics for Bose-Einstein condensates. \emph{Analysis \& PDE} \textbf{12} (2019), no. 6, 1513--1596.

\bibitem{BSo}
%B. Brietzke. On the Second Order Correction to the Ground State Energy of the Dilute Bose Gas. PhD %thesis (2017).
B. Brietzke, J.P. Solovej. The Second Order Correction to the Ground State Energy of the Dilute Bose Gas. {\it Annales Henri Poincar\'e} {\bf 21} (2020), 571--626

\bibitem{BFS}
%%B. Brietzke. On the Second Order Correction to the Ground State Energy of the Dilute Bose Gas. PhD %thesis (2017).
B. Brietzke, S. Fournais, J.P. Solovej. A simple second order lower bound to the energy of the dilute Bose gases. {\it Comm. Math. Phys} {\bf 376} (2020), 321--351. 
%


%\bibitem{DN}
%J.~ Derezi\'nski, M.~Napi\'orkowski. Excitation Spectrum of Interacting Bosons in the Mean-Field Infinite-%Volume Limit. {\it Annales Henri Poincar\'e} {\bf 
%15} (2014), 2409-2439. 

\bibitem{Dy}
F.J. Dyson. Ground-State Energy of a Hard-Sphere Gas. {\it Phys. Rev.} {\bf 106} (1957), 20--26.


%\bibitem{ESY0} 
%L.~{Erd\H{o}s}, B.~{Schlein} and H.-T.~{Yau}.
%Derivation of the Gross-Pitaevskii hierarchy for the dynamics of Bose-Einstein condensate. {\it Comm. Pure  Appl. Math.} {\bf 59} (2006), no. 12, 1659--1741.

%\bibitem{ESY}
%L. Erd\H os, B. Schlein, H.-T. Yau. Ground-state energy of a low-density Bose gas: a second order upper %bound. {\it Phys. Rev. A} {\bf 78} (2008), 053627.
%

%\bibitem{ESY2}
%L.~{Erd{\H{o}}s}, B.~{Schlein}, H.-T.~{Yau}.
%\newblock Derivation of the {G}ross-{P}itaevskii equation for the dynamics of
%  {B}ose-{E}instein condensate,
%\newblock \emph{Ann. of Math.} \textbf{172} (2010), no. 1, 291--370.


%\bibitem{EY}
%L.~{Erd{\H{o}}s} and H.-T.~{Yau}. {Derivation of the
%  nonlinear {S}chr\"{o}dinger equation from a many-body {C}oulomb system}. \emph{Adv.
%  Theor. Math. Phys.} \textbf{5} (2001),  no. 6, 1169--1205.

\bibitem{F} S. Fournais. Length scales for BEC in the dilute Bose gas. arXiv:2011.00309

\bibitem{FS}
S. Fournais, J.P. Solovej. The energy of dilute Bose gases. {\it Annals of Mathematics}, {\bf 192} (2020)


\bibitem{HS} C. Hainzl, R. Seiringer. The BCS critical temperature for potentials with negative scattering length. {\it Lett. Math. Phys.}  {\bf 84} (2008)2-3, 99-107 

%\bibitem{GiuS}
%A. Giuliani, R. Seiringer. The ground state energy of the weakly interacting Bose gas at high density. {\it J. %Stat. Phys.} {\bf 135} (2009), 915.
%
%\bibitem{GS}
%P. Grech, R. Seiringer. The excitation spectrum for weakly interacting bosons in a trap. {\it Comm. Math. %Phys.} {\bf 322} (2013), no. 2, 559-591.
%
%\bibitem{LNS} M.~{Lewin}, P.~T.~{Nam} and B.~{Schlein}. 
%{Fluctuations around Hartree states in the mean-field regime}. %Preprint arXiv:1307.0665.
%
%\bibitem{Lan}
%L.D. Landau. Theory of the superfluidity of Helium II.
%{\it Phys. Rev.} {\bf 60} (1941), 356Ð-358.
%
%\bibitem{LNR1} M. Lewin, P.~T.~Nam, N. Rougerie. Derivation of Hartree's theory for generic
%mean-field Bose gases. {\it Adv. Math.} {\bf 254} (2014), pp. 570-621. 

%\bibitem{LNR2} M. Lewin, P.~T.~Nam, N. Rougerie. The mean-field approximation and the 
%non-linear Schr\"odinger functional for trapped  {B}ose gases. {\it Trans. Amer. Math. Soc.} {\bf 368} 
%(2016), no. 9, 6131-6157. 

%\bibitem{HY} K. Huang, C. N. Yang. Quantum-Mechanical Many-Body Problem with Hard-Sphere Interaction. \emph{Phys. Rev.} \textbf{105}, 3, (1957), 767 --757.
%
%\bibitem{LHY} T. D. Lee, K. Huang, C. N. Yang. Eigenvalues and Eigenfunctions of a Bose System of Hard Spheres and Its Low-Temperature Properties. \emph{Phys. Rev.} \textbf{106} (1957), 6, 1135--1145.
%
%\bibitem{LY} T. D. Lee, C. N. Yang. Many-Body Problem in Quantum Mechanics and Quantum Statistical Mechanics. \emph{Phys. Rev.} \textbf{105} (1957), 1119 -- 1120.
%
%\bibitem{LNSS} M.~Lewin, P.~T.~{Nam}, S.~{Serfaty}, J.P. {Solovej}. Bogoliubov spectrum of interacting Bose gases.  \newblock \emph{Comm. Pure Appl. Math.} \textbf{68} (2014), 3, 413 -- 471. 

\bibitem{LS1}
E.~H.~Lieb and R.~Seiringer.
\newblock Proof of {B}ose-{E}instein condensation for dilute trapped gases.
\newblock \emph{Phys. Rev. Lett.} \textbf{88} (2002), 170409.

%\bibitem{LS2}
%E.~H.~Lieb and R.~Seiringer.
%\newblock Derivation of the Gross-Pitaevskii equation for rotating Bose gases.
%\newblock \emph{Comm. Math. Phys. } \textbf{264}:2 (2006), 505-537.
%
\bibitem{LSSY}
E.~H.~Lieb, R.~Seiringer, J. P. Solovej and J.~Yngvason. \textit{
The Mathematics of the Bose Gas and its Condensation}. Series: Oberwolfach Seminars. 
Birkh\"auser Verlag, 2005.


\bibitem{LSY}
E.~H.~Lieb, R.~Seiringer, and J.~Yngvason.
\newblock Bosons in a trap: A rigorous derivation of the {G}ross-{P}itaevskii
  energy functional. \newblock \emph{Phys. Rev. A} \textbf{61} (2000), 043602.


\bibitem{LS2}
E.~H.~Lieb and R.~Seiringer.
\newblock Derivation of the Gross-Pitaevskii Equation for Rotating Bose Gases
\newblock \emph{Comm. Math. Phys.} \textbf{264} (2006), 505--537.


%\bibitem{LSo}
%E.~H.~Lieb, J. P. Solovej. Ground state energy of the one-component charged Bose gas. {\it Comm. Math. Phys.} {\bf 217} (2001), 127--163. Errata: {\it Comm. Math. Phys.} {\bf 225} (2002), 219-221.

%\bibitem{LSo2}
%E.~H.~Lieb, J. P. Solovej. Ground state energy of the two-component charged Bose gas. {\it Comm. Math. %Phys.} {\bf 252} (2004), 485--534.

\bibitem{LY} 
E.~H.~Lieb, J.~Yngvason. Ground State Energy of the low density Bose Gas. {\it Phys. Rev. Lett.} {\bf 80} %(1998), 2504Ð2507. 

\bibitem{NNRT} 
P.~T.~{Nam}, M. Napi{\'o}rkowski, J. Ricaud, A. Triay. Optimal rate of condensation for trapped bosons in the Gross--Pitaevskii regime. Preprint arXiv:2001.04364. 

\bibitem{NRS}
P.~T.~{Nam}, N. ~{Rougerie}, R.~Seiringer. Ground states of large bosonic systems: The Gross-Pitaevskii limit revisited. 
\emph{Analysis and PDE.} {\bf 9} (2016), no. 2, 459--485

%\bibitem{NRS1}
%M. Napi{\'o}rkowski, R. Reuvers, J. P. Solovej. The Bogoliubov free energy functional I. Existence of %minimizers and phase diagrams. Preprint arxiv:1511.05935.
%
%\bibitem{NRS2}
%M. Napi{\'o}rkowski, R. Reuvers, J. P. Solovej. The Bogoliubov free energy functional II. The dilute limit.  %Preprint arxiv:1511.05953.
%
%
%
%\bibitem{P1}
%A. Pizzo. Bose particles in a box I. A convergent expansion of the ground state of a three-modes %Bogoliubov Hamiltonian in the mean field limiting regime. Preprint arxiv:1511.07022.
%
%\bibitem{P2}
%A. Pizzo. Bose particles in a box II. A convergent expansion of the ground state of the Bogoliubov %Hamiltonian in the mean field limiting regime. Preprint arxiv:1511.07025.
%
%\bibitem{P3}
%A. Pizzo. Bose particles in a box III. A convergent expansion of the ground state of the Hamiltonian in 
%the mean field limiting regime. Preprint arxiv:1511.07026.
%
%
%\bibitem{Sei}
%R. Seiringer. The Excitation Spectrum for Weakly Interacting Bosons. {\it Comm. Math. Phys.} {\bf 
%306} (2011), 565Ð-578. 

%\bibitem{So}
%J. P. Solovej. Upper bounds to the ground state energies of the one- and two-component charged Bose %gases. {\it Comm. Math. Phys.} {\bf 266} (2006), no. 3, 797-818.
%
%\bibitem{YY}
%H.-T. Yau, J. Yin. The second order upper bound for the ground state energy of a Bose gas. {\it J. Stat. %Phys.} {\bf 136} (2009), no. 3, 453--503.

\end{thebibliography}
\end{document}